\newtheorem{fact}{Fact}
\newcommand{\Prob}{\Pr}
\newcommand{\E}{\mathrm{E}}
\newcommand{\weakINV}{\textbf{($\clubsuit$)}}
\newcommand{\first}{\textbf{($\bullet$)}}
\newcommand{\second}{\textbf{($\circ$)}}
\newcommand{\INV}{\textbf{($\ast$)}}
\newcommand{\WindowSort}{\textsc{Window Sort}\xspace}
\newcommand{\err}[2]{ERRORS(#1,#2)}
\def\signed #1{{\leavevmode\unskip\nobreak\hfil\penalty50\hskip2em
		\hbox{}\nobreak\hfil(#1)%
		\parfillskip=0pt \finalhyphendemerits=0 \endgraf}}
\newsavebox\mybox
\newenvironment{aquote}[1]
{\savebox\mybox{#1}\begin{quote}}
	{\signed{\usebox\mybox}\end{quote}}
\title{Sorting with Recurrent  Comparison Errors\footnote{Research supported by SNSF (Swiss National Science Foundation, project number 200021\_165524).}}
\author{Barbara~Geissmann}
\author{Stefano~Leucci} 
\author{Chih-Hung~Liu}
\author{Paolo~Penna}
\affil{Department of Computer Science, ETH Z\"{u}rich, Switzerland
\texttt{\{name\}.\{surname\}@inf.ethz.ch}}
\authorrunning{B. Geissmann, S. Leucci, Ch. Liu and P. Penna}
\subjclass{F.2.2 Sorting and Searching}
\keywords{sorting, recurrent comparison error, maximum and total dislocation }
\let\@DOIPrefix\@empty
\begin{document}
\maketitle
\begin{abstract}
	We present a sorting algorithm for the case of \emph{recurrent} random comparison \emph{errors}. The algorithm essentially achieves simultaneously good properties of previous algorithms for sorting $n$ distinct elements in this model. In particular, it runs in $O(n^2)$ time,  the \emph{maximum dislocation} of the elements in the output is $O(\log n)$, while the \emph{total dislocation} is $O(n)$. These guarantees are the best possible since we prove that even randomized algorithms cannot achieve $o(\log n)$ maximum dislocation with high probability, or $o(n)$ total dislocation in expectation, regardless of their running time. 
\end{abstract}
\section{Introdution}
We study the problem of \emph{sorting} $n$ distinct elements under \emph{recurrent} random comparison \emph{errors}. In this classical model, 
each comparison is wrong with some fixed probability $p$, and correct with probability $1-p$. The probability of errors are independent over all possible pairs of elements, but errors are recurrent: If the same comparison is repeated at any time during the computation, the result is always the same, i.e., always wrong or always correct. 

In such a scenario not all sorting algorithms perform equally well in terms of the output, as some of them are more likely to produce ``nearly sorted'' sequences than others.
To measure the quality of an output permutation in terms of sortedness, a common way is to consider the \emph{dislocation} of an element, which is the difference between its position in the permutation and its true rank among all elements. Two criteria based on the dislocations of the elements are the \emph{total dislocation} of a permutation, i.e., the sum of the dislocations of all $n$ elements,
and the \emph{maximum dislocation} of any element in the permutation. 
Naturally,  the running \emph{time} remains an important criteria for evaluating sorting algorithms.  

To the best of our knowledge, for recurrent random comparison errors, the best results with respect to \emph{running time}, \emph{maximum}, and \emph{total dislocation} are achieved by the following two different algorithms:
\begin{itemize}
	\item Braverman and Mossel~\cite{Braverman2008} give an algorithm which guarantees maximum dislocation $O(\log n)$ and total dislocation $O(n)$, both with high probability. The main drawback of this algorithm seems to be its running time, as the constant exponent can be rather high;
	\item Klein et al.~\cite{Klein2011} give a much faster $O(n^2)$-time algorithm which guarantees maximum dislocation $O(\log n)$, with high probability. 	They however do not provide any upper bound on the total dislocation, which by the previous result is obviously $O(n\log n)$. 
\end{itemize}
\medskip
In this paper we investigate whether it is possible to guarantee  \emph{all} of these bounds together, that is, if there is an algorithm with running time $O(n^2)$,  maximum dislocation $O(\log n)$, and total dislocation $O(n)$.

\begin{table}[t]
\centering
\begin{tabular}{|l|c|c||c|}
\hline
& Braverman and Mossel \cite{Braverman2008} & Klein et al~\cite{Klein2011} & Ours\\
\hline
\# Steps & $O(n^{3+24c})$ & $O(n^2)$ & $O(n^2)$  \\
\hline 
Maximal Dislocation & w.h.p. $O(\log n)$ & w.h.p. $O(\log n)$ & w.h.p. $O(\log n)$ \\
\hline
Total Dislocation & w.h.p.  $O(n)$ & w.h.p. $O(n \log n)$  & in exp. $O(n)$ \\
\hline
\end{tabular}
\vspace{10pt}
\caption{Our and previous results. 
	The constant $c$ in \cite{Braverman2008} depends on both the error probability $p$, and the success probability of the algorithm. For example, for a success probability of $1-1/n$, the analysis in \cite{Braverman2008} yields   $c=\frac{110525}{(1/2-p)^4}$. The total dislocation in \cite{Klein2011} follows trivially by the maximum dislocation, and no further analysis is given.}
\label{tb-recurrent}
\end{table}

\subsection{Our contribution}
We propose a new algorithm whose performance guarantees are essentially the best of the two previous algorithms (see Table~\ref{tb-recurrent}). Indeed, our algorithm  \WindowSort
 takes $O(n^2)$ time and guarantees the maximum dislocation to be $O(\log n)$ with probability $1-1/n$
and the expected total dislocation to be $O(n)$. 
The main idea is to iteratively sort $n$ elements by comparing each element with its neighbouring elements lying within a \emph{window}
and to halve the window size after every iteration.
In each iteration, each element is assigned a rank based on the local comparisons,
and then they are placed according to the computed ranks.

Our algorithm is  inspired by Klein et al.'s algorithm \cite{Klein2011} which 
distributes elements into \emph{buckets} according to their computed ranks, compares each element with elements in neighboring buckets to obtain a new rank, and halves the range of a bucket iteratively. 
Note however that the two algorithms operate in a different way, essentially because of the following key difference between \emph{bucket} and \emph{window}. 
The number of elements in a bucket is not fixed, since the computed rank of several elements could be the same. In a window, instead, the number of elements is \emph{fixed}. This property is essential in the analysis of the total dislocation of  \WindowSort, but introduces a potential \emph{offset} between the computed rank and the computed position of an element. Our analysis consists in showing that such an offset is sufficiently small, which we do by considering a number of ``delicate'' conditions that the algorithm should maintain throughout its execution with sufficiently high probability. 

We first describe a standard version of our algorithm which achieves the afore mentioned bounds for any error probability $p <1/32$. We then improve this result to $p<1/16$ by using the idea of shrinking the window size at a different rate. Experimental results (see \Cref{sec-exp}) show that
the performance of the standard version is significantly better than the theoretical guarantees.
In particular, the experiments suggest that the expected total dislocation  is $O(n)$ for $p<1/5$,
while the maximum dislocation is $O(\log n)$ for $p<1/4$.

In addition, we prove that no sorting algorithm can guarantee the maximum dislocation to be $o(\log n)$ with high probability,
and no sorting algorithm can guarantee the expected total dislocation to be $o(n)$.

\subsection{Further Related Work on Sorting with Comparison Errors}
Computing with errors is often considered in the framework of a two-person game called \emph{R\'{e}nyi-Ulam Game}:
The \emph{responder} thinks of an object in the search space, and the \emph{questioner} has to find it by asking questions to which the responder provides answers.
However, some of the answers are incorrect on purpose; the responder is an \emph{adversarial lier}.
These games have been extensively studied in the past on various kinds of search spaces, questions, and errors; see Pelc's survey \cite{Pelc02} and Cicalese's monograph \cite{Cicalese13}.

Feige et al.~\cite{Feige1994} studied several comparison based algorithms with independent random errors
where the error probability of a comparison is less than half, the repetitions of an comparison can obtain different outcomes, and all the comparisons are independent.
They required the reported solution to be correct with a probability $1-q$, where $0<q<1/2$,
and proved that for sorting,  $O(n\log(n/q))$ comparisons suffice, which gives also the running time. 
In the same model, sorting by random swaps represented as Markovian processes have been studied under the question of the number of inversions (reversed pairs) \cite{gecco17,barbarapaolo}, which is within a constant factor of the total dislocation \cite{diaconis}.
Karp and Kleinberg~\cite{KarpK07} studied a noisy version of the classic binary search problem, where elements cannot be compared directly. 
Instead, each element is associated with a coin that has an unknown probability of observing heads when tossing it and these probabilities increase when going through the sorted order.

For recurring errors,
Coppersmith and Rurda \cite{Coppersmith} studied a simple algorithm that gives a 5-approximation on the weighted feedback arc set in tournaments (FAST) problem if the weights satisfy probability constraints. The algorithm consists of ordering the elements based on computed ranks, which for unweighted FAST are identical to our computed ranks.
Alonso et al.\ \cite{alonso} and Hadjicostas and Lakshamanan \cite{hadji} studied Quicksort and recursive Mergesort, respectively, with random comparison errors.

\paragraph*{Paper organization}
The rest of this paper is organized as follows. We present the \WindowSort\ algorithm in Section~\ref{sec-algorithm} and analyze the maximum and total dislocation in 
Section~\ref{sec-max} and Section~\ref{sec-total}, respectively. 
Then, we explain how to modify \WindowSort\ to allow larger error probabilities in \Cref{sec-extension} and show some experimental results in \Cref{sec-exp}. Additionally, we prove lower bounds on both the maximum and average dislocation for any sorting algorithm in \Cref{sec:lowerbound}.

\section{Window Sort}\label{sec-algorithm}
\WindowSort\ consists of multiple iterations of the same procedure: 
Starting with a permutation $\sigma$ and a \textit{window size} $w$,  
we compare each element $x$ in $\sigma$ with its left $2w$ and right $2w$ adjacent elements (if they exist) and count its \emph{wins}, i.e., the number of times a comparison outputs $x$ as the larger element. Then, we obtain the \emph{computed rank} for each element based on its \textit{original position} in $\sigma$ and its wins: 
if $\sigma(x)$ denotes the original position of $x$ in $\sigma$, the computed rank of $x$ equals $\max\{0,\sigma(x)-2w\}$ plus the number of its wins.
And we get a new permutation $\sigma'$ by placing the elements ordered by their computed ranks.
Finally, we set $w' = w/2$ and start a new iteration on $\sigma'$ with window size $w'$. In the very first iteration, $w=n/2$.
We formalize \WindowSort\ in Algorithm \ref{alg:iter}.

In the following, w.l.o.g. we assume to sort elements $\{1,\dots,n\}$, i.e., we refer to both an element $x$ and its rank by $x$.
Let $\sigma$ denote the permutation of the elements at the beginning of the current iteration of \WindowSort\ and let $\sigma'$ denote the permutation obtained after this iteration (i.e., the permutation on which the next iteration performs). Similarly, let $w$ and $w'=w/2$ denote the window size of the current and the next iteration.
Furthermore, let $\pi$ denote the sorted permutation. We define four important terms for an element $x$ in $\sigma$: 
\begin{itemize}
	\item \emph{Current/Original position}: The position of  $x$  in $\sigma$: $\sigma(x)$
	\item \emph{Computed rank}: The current position of $x$ minus $2w$ (zero if negative) plus its number of wins: $computed\_rank(x)$
	\item \emph{Computed position}: The position of  $x$  in $\sigma'$: $\sigma'(x)$
\end{itemize}

\begin{theorem}
	\WindowSort~takes $O(n^2)$ time. \end{theorem}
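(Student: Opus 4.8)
The plan is to bound the running time iteration by iteration and then sum over all iterations. First I would count the number of iterations: the window size starts at $w = n/2$ and is halved after every iteration, so after $i$ iterations it equals $n/2^{i+1}$. The process therefore terminates after $O(\log n)$ iterations, namely once $w$ drops below $1$.

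Next I would bound the cost of a single iteration with window size $w$. For each element $x$ we perform at most $4w$ comparisons (against the $2w$ elements to its left and the $2w$ elements to its right), so counting the wins of all $n$ elements costs $O(nw)$ time in total. Given its number of wins, the computed rank of each element is obtained in $O(1)$ time from its original position $\sigma(x)$, for a total of $O(n)$. Finally, building the new permutation $\sigma'$ amounts to placing the elements in order of their computed ranks; since these ranks are integers in $\{0,1,\dots,n\}$, this can be carried out with counting sort (breaking ties, e.g., by original position) in $O(n)$ time. Hence a single iteration with window $w$ runs in $O(nw + n) = O(nw)$ time.

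It then remains to sum over all iterations. Writing $w_i = n/2^{i+1}$ for the window size in iteration $i$, the total running time is bounded by
\[
\sum_{i \ge 0} O(n\,w_i) \;=\; \sum_{i \ge 0} O\!\left(\frac{n^2}{2^{i+1}}\right) \;=\; O(n^2)\sum_{i \ge 0} \frac{1}{2^{i+1}} \;=\; O(n^2),
\]
since the geometric series converges. Intuitively, the cost is dominated by the very first iteration, in which the window spans half of the sequence.

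I do not expect a genuine obstacle here; the only point requiring a little care is the placement step, which must be implemented in linear time (via counting sort on the bounded integer ranks) rather than by a general comparison sort, so that it does not contribute an extra logarithmic factor. Even using a comparison sort one would pay $O(n\log n)$ per iteration and hence $O(n\log^2 n)$ overall, still dominated by $O(n^2)$, so the final bound is robust to this implementation choice.
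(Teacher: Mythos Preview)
Your proof is correct and follows essentially the same approach as the paper: bound each iteration by $O(nw)$ for the comparisons plus $O(n)$ for placing the elements via counting sort, then sum the resulting geometric series over the $O(\log n)$ halvings of $w$. The paper's argument is identical in structure, only more terse.
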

\begin{proof}
	Consider the three steps in Algorithm \ref{alg:iter}.
	The number of comparisons in an iteration in the outer loop is $4w$, for $w$ the current size of the window. Therefore, the first step needs $O(nw)$ time. For the second step we could apply for instance Counting Sort (see e.g. \cite{Cormen}), which takes $O(n)$ time, since all computed ranks lie between zero and $n$. Thus, the total running time is upper bounded by
	$\sum_{i=0}^\infty O(\frac{4n^2}{2^i}) = O(8 n^2)$. 	
\end{proof}

\begin{algorithm2e}[t]
	\vspace{2mm}
	\textbf{Initialization:} The initial window size is $w=n/2$. Each element $x$ has two variables $wins(x)$ and $computed\_rank(x)$ which are set to zero.
	
	\Repeat{$w<1$}{
		\begin{enumerate} 
			\item \ForEach{\emph{{$x$ at position}} $l=1,2,3,\ldots,n$\emph{ in }$\sigma$}{
				\ForEach{\emph{{$y$ whose position in} $\sigma$  is in} $[l-2w,l-1]$ or \emph{in} $[l+1,l+2w]$}{
					\If{$x>y$}{
						$wins(x) = wins(x) + 1$}}
				$computed\_rank(x) = \max\{l-2w, 0\} + wins(x)$
			} \label{alg:computed_rank}
			\item {Place} the elements into $\sigma'$ ordered by non-decreasing $computed\_rank$,\\
			break ties arbitrarily.
			\item Set all $wins$ to zero, $\sigma=\sigma'$, and $w=w/2$.
		\end{enumerate}	}
		\caption{\WindowSort~ (on a permutation $\sigma$ on $n$ elements)}
		\label{alg:iter}
	\end{algorithm2e}

\subsection{Preliminaries}

We first introduce a condition on the errors in comparisons between an element $x$ and a fixed subset of elements which depends only on the window size $w$. 

\begin{definition}
	We define $\err{x}{w}$ as the set of errors among the comparisons between $x$ and every $y\in [x-4w,x+4w]$. 
\end{definition}

\begin{theorem}\label{th:window-sort:error-analysis}
	 \WindowSort returns a sequence of maximum dislocation at most $9w^{\star}$ whenever the initial comparisons are such that 
	\begin{equation}
	|\err{x}{w}| \leq w/4 \label{eq:error-condition}
	\end{equation}
	hold for all elements $x$ and for all $w=n/2,n/4,\ldots,2w^\star$.
\end{theorem}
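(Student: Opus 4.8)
The plan is to argue by induction on the halving window size $w$, maintaining as invariant that the maximum dislocation of the current permutation $\sigma$ at the start of the iteration with window $w$ is $O(w)$ (concretely $\le w$), and showing that one iteration drives it down to $O(w/2)$, matching the next window size. The base case is the first iteration, where $w=n/2$ and the window $[\sigma(x)-2w,\sigma(x)+2w]$ covers every position, so each $x$ is compared against all other elements: the hypothesis $|\err{x}{w}|\le w/4$ for $w=n/2$ bounds the net number of miscounted wins by $n/8$, so after sorting by computed rank the dislocation of the input to the next iteration is at most $n/4$, i.e.\ at most the next window size $w=n/4$.

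For the inductive step I would first control the \emph{computed rank}. Assuming the inductive dislocation bound $D\le w$, every element $y$ that $x$ is compared against this iteration lies at a position in $[\sigma(x)-2w,\sigma(x)+2w]$ and hence has true rank in $[x-4w,x+4w]$; therefore all erroneous comparisons affecting $x$ belong to $\err{x}{w}$, and the discrepancy between $wins(x)$ and the true number of smaller elements inside the window is at most $|\err{x}{w}|\le w/4$. The same bound $D\le w$ forces every element to the left of $x$'s window to have true rank smaller than $x$ and every element to its right to have larger true rank; substituting this into $computed\_rank(x)=\max\{\sigma(x)-2w,0\}+wins(x)$ makes the deterministic part telescope to $x$, so $|computed\_rank(x)-x|\le w/4$ up to an $O(1)$ boundary correction.

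It remains to pass from the computed rank to the \emph{computed position} $\sigma'(x)$, the step the introduction flags as delicate because a window holds a \emph{fixed} number of elements. I would bound $|\sigma'(x)-x|$ by counting the elements that can overtake $x$ once everything is sorted by computed rank. Writing $\sigma'(x)-x$ as the number of larger elements placed before $x$ minus the number of smaller elements placed after it, an element $y>x$ can precede $x$ only if $computed\_rank(y)\le computed\_rank(x)$, which forces $y-x\le(computed\_rank(x)-x)-(computed\_rank(y)-y)\le 2\cdot w/4$; hence at most $w/2$ larger elements can overtake $x$, and symmetrically for smaller ones. This gives $|\sigma'(x)-x|\le w/2+O(1)$, so the dislocation bound halves together with the window. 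Iterating from $w=n/2$ down to $w=2w^\star$ and accumulating the boundary corrections yields maximum dislocation $O(w^\star)$, with careful bookkeeping giving the stated $9w^\star$.

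The main obstacle is that this recursion is essentially tight: the computed-rank error can be as large as $w/4$, the rank-to-position offset can genuinely double it (a single shared comparison error simultaneously lowers the larger element's rank and raises the smaller element's, pushing them past one another), and $2\cdot(w/4)=w/2$ exactly equals the next window size, leaving no multiplicative slack. The whole difficulty therefore lies in showing that the $O(1)$ additive terms — the boundary off-by-one in $computed\_rank$ for elements at positions at most $2w$, and the few extra out-of-window comparisons incurred when the dislocation bound mildly exceeds $w$ — do not compound across the $\Theta(\log n)$ iterations. Keeping these terms bounded, rather than the leading-order halving, is what the constant $9$ in the statement absorbs.
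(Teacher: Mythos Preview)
Your inductive skeleton matches the paper's exactly: maintain the invariant \INV\ that every element has dislocation at most $w$ at the start of the iteration with window $w$, and show one iteration reduces this to $w/2$. Your two steps---bounding $|computed\_rank(x)-x|$ by $|\err{x}{w}|\le w/4$ using that all comparison partners have rank in $[x-4w,x+4w]$, then bounding $|\sigma'(x)-x|$ by $w/2$ via the overtaking count---are precisely Lemmas~\ref{lem-analyzable} and~\ref{lem-difference}, and the paper states them without any additive $O(1)$ correction: the recursion is treated as exact, so there is nothing to compound across $\Theta(\log n)$ rounds.

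The genuine gap in your outline is elsewhere, and it is also where the constant $9$ actually comes from. Your induction only runs while the hypothesis~\eqref{eq:error-condition} is available, i.e.\ down to the iteration with window $2w^\star$, after which every element has dislocation at most $w^\star$. But \WindowSort does not stop there; it continues halving the window down to $w<1$, and for those tail windows the error bound is \emph{not} assumed, so your induction no longer applies and those iterations can in principle undo progress. The paper handles this with a separate, error-oblivious argument (Lemma~\ref{lem-offset}): regardless of comparison outcomes, $|computed\_rank(x)-\sigma(x)|\le 2w$ by construction and $|computed\_rank(x)-\sigma'(x)|\le 2w$ by pigeonhole (Lemma~\ref{lem-diff}), so one iteration with window $w$ moves each element at most $4w$ positions, and the tail iterations with windows $w^\star,w^\star/2,\ldots$ together move each element by at most $\sum_{i\ge 0}4w^\star/2^i=8w^\star$. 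The final bound is therefore $w^\star+8w^\star=9w^\star$; the $9$ absorbs the uncontrolled tail, not accumulated boundary off-by-ones.
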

The proof of this theorem follows in the end of this section.
In the analysis, we shall prove the following:
\begin{itemize}
	\item If the computed rank of each element is close to its (true) rank, then the dislocation of each elements is small (Lemma~\ref{lem-diff}). 
	\item The computed rank of each element is indeed close to its (true) rank if the number of errors involving the element under consideration is small (Lemma~\ref{lem-analyzable}).
	\item The number of positions an element can move in further iterations is small (Lemma~\ref{lem-offset}).
\end{itemize}

We now introduce a condition that implies \Cref{th:window-sort:error-analysis}: Throughout the execution of \WindowSort we would like every element $x$ to satisfy the following condition:
\begin{quote}\begin{itemize}
		\item[{\Large\INV}]  For window size $w$, the dislocation of $x$ is at most $w$.
	\end{itemize}
\end{quote}

We also introduce two further conditions, which essentially relax the requirement that all elements satisfy \INV.  
The first condition justifies the first step of our algorithm, while the second condition restricts the range of elements that get compared with $x$ in some iteration:
\begin{quote}\begin{itemize}
		\item[{\Large\first}]  For window size $w$, element $x$ is larger (smaller) than all the elements lying apart by more than $2w$ positions left (right) of $x$'s original position.
	\end{itemize}
\end{quote}
\begin{quote}\begin{itemize}
		\item[{\Large\second}]  For window size $w$, $x$ and its left $2w$ and right $2w$ adjacent elements satisfy condition~\INV.
	\end{itemize}
\end{quote}

Note that if  \INV\ holds for all elements, then \first\ and \second\ also hold for all elements.
For elements that satisfy both \first\ and \second , the computed rank is close to the true rank if there are few errors in the comparisons:

\begin{lemma}\label{lem-analyzable}
	For every window size $w$, if an element $x$ satisfies satisfy both \first\ and \second, then the absolute difference between the computed rank and its true rank is bounded by 
	\[
	|computed\_rank(x) - x| \leq |\err{x}{w}| \enspace .
	\]
\end{lemma}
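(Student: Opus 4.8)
The plan is to compare the quantity $wins(x)$ the algorithm actually computes against the value it would produce if every comparison inside the window were correct, and then to show that this error-free value, once shifted by $\max\{\sigma(x)-2w,0\}$, is exactly the true rank $x$. Write $l=\sigma(x)$ and let $W$ be the set of elements compared with $x$ in this iteration, i.e.\ those occupying positions in $[l-2w,l-1]\cup[l+1,l+2w]$ (clipped to $[1,n]$). Partition $W$ into $A=\{y\in W: y<x\}$ and $B=\{y\in W: y>x\}$. A comparison contributes to $wins(x)$ precisely when it outputs $x$ as the larger element, so in the error-free world $wins(x)$ would equal $|A|$; each error against an element of $A$ turns a genuine win into a loss and each error against an element of $B$ turns a genuine loss into a win, hence $wins(x)=|A|-e_A+e_B$, where $e_A,e_B$ count the erroneous window comparisons against $A$ and $B$. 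In particular $|wins(x)-|A||\le e_A+e_B$, so every window error shifts the count by at most one.

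Next I would invoke \first to evaluate the error-free computed rank $\max\{l-2w,0\}+|A|$. By \first every element more than $2w$ positions to the left of $x$ is smaller than $x$, so each of the $\max\{l-2w-1,0\}$ positions strictly left of the window holds an element of $\{1,\dots,x-1\}$; symmetrically, no element smaller than $x$ can lie more than $2w$ positions to the right, so all the remaining elements of $\{1,\dots,x-1\}$ sit inside $W$. Counting the $x-1$ elements below $x$ then gives $|A|=(x-1)-\max\{l-2w-1,0\}$, and substituting this into $\max\{l-2w,0\}+|A|$ telescopes to $x$. This counting identity, and in particular the careful handling of the $\max\{\cdot,0\}$ clipping in the left-boundary regime $l\le 2w$ (where the additive $+1$ that encodes the $1$-indexed rank must be tracked), is the step I expect to be the most delicate.

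Finally I would bound $e_A+e_B$ by $|\err{x}{w}|$ using \second, which is the conceptual bridge from the \emph{position}-based window to the \emph{rank}-based error set. Condition \second guarantees that $x$ and all of its $2w$ left and right neighbours satisfy \INV, so each of them has dislocation at most $w$. For any $y\in W$ we have $|\sigma(y)-\sigma(x)|\le 2w$ together with $|y-\sigma(y)|\le w$ and $|x-\sigma(x)|\le w$, whence $|y-x|\le 4w$ by the triangle inequality; that is, every window comparison pits $x$ against an element of rank in $[x-4w,x+4w]$. Therefore the erroneous window comparisons form a subset of $\err{x}{w}$, giving $e_A+e_B\le|\err{x}{w}|$. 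Combining the three steps, and using that $computed\_rank(x)=\max\{l-2w,0\}+wins(x)$ while the same offset turns $|A|$ into $x$, yields $|computed\_rank(x)-x|=|wins(x)-|A||\le e_A+e_B\le|\err{x}{w}|$, which is the claimed bound.
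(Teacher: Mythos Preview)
Your argument is correct and is precisely the intended proof; the paper's own proof is the single sentence ``This follows immediately from condition \first,'' so you have in fact supplied considerably more detail than the paper does. In particular, your explicit use of \second\ to show that every window neighbour $y$ satisfies $|y-x|\le 4w$ (so that the erroneous window comparisons lie in $\err{x}{w}$) is a step the paper's one-line proof leaves implicit, even though it is genuinely needed for the bound as stated.

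One small remark on the boundary case you already flagged as delicate: when $l\le 2w$ the identity $\max\{l-2w,0\}+|A|=x$ is actually off by one (it evaluates to $x-1$), because $\max\{l-2w,0\}-\max\{l-2w-1,0\}=0$ there rather than $1$. This is an artefact of the paper's indexing convention in the definition of $computed\_rank$, not a flaw in your reasoning, and it is immaterial for all subsequent uses of the lemma, which only require $|computed\_rank(x)-x|\le w/4$ for $w\ge\Omega(1)$.
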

\begin{proof}
	This follows immediately from condition \first.
\end{proof}

We now consider the difference between the computed rank and the computed position of an element, which we define as the \emph{offset} of this element. Afterwards, we consider the  difference between the original position and the computed position of an element.

\begin{fact}\label{fact}
	Observe that by Step~\ref{alg:computed_rank} of the algorithm it holds that,
	for every permutation $\sigma$, every window size $w$ and every element $x$, the difference between $\sigma(x)$ and the computed rank of $x$ is at most $2w$, 
	\[
	|computed\_rank(x) - \sigma(x)| \leq 2w \enspace .
	\]
\end{fact}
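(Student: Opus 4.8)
The plan is to argue directly from the defining equation in Step~\ref{alg:computed_rank}. Writing $l = \sigma(x)$ for the original position, the algorithm sets $computed\_rank(x) = \max\{l - 2w, 0\} + wins(x)$, so the whole claim reduces to controlling the additive term $wins(x)$ against the truncation $\max\{l-2w,0\}$. First I would record the elementary facts that $wins(x) \ge 0$ and that $wins(x)$ cannot exceed the number of comparisons performed for $x$ in this iteration, i.e. the number of its windowed neighbours, which is at most $2w$ on the left and at most $2w$ on the right.

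The lower bound is then immediate and needs no case analysis: discarding the nonnegative term $wins(x)$ gives $computed\_rank(x) \ge \max\{l-2w,0\} \ge l-2w$, so that $computed\_rank(x) - l \ge -2w$.

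For the upper bound I would split on whether the truncation is active. When $l \ge 2w$ we have $\max\{l-2w,0\} = l-2w$, and the crude bound $wins(x) \le 4w$ already suffices, since $computed\_rank(x) \le (l-2w)+4w = l+2w$. The delicate case is $l < 2w$, where the truncation vanishes but the bound $wins(x) \le 4w$ becomes too weak: it can exceed $l+2w$ when $l$ is small. Here I would instead use that $x$ has only $l-1$ neighbours to its left (positions $1,\dots,l-1$) and at most $2w$ to its right, so that $wins(x) \le (l-1)+2w$ and hence $computed\_rank(x) = wins(x) \le l+2w-1 \le l+2w$. Combining the two directions yields $|computed\_rank(x)-l|\le 2w$.

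The only genuine obstacle is this boundary bookkeeping near the left end. The temptation is to bound $wins(x)$ uniformly by $4w$, but for small $l$ that loses the claim; the fix is to observe that precisely when the $-2w$ truncation disappears, the number of possible left wins is itself capped by $l-1$, which exactly compensates. Everything else is arithmetic.
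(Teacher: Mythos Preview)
Your argument is correct. The paper states this as a \emph{Fact} with no separate proof, simply pointing to Step~\ref{alg:computed_rank} of the algorithm as self-evident; your write-up supplies exactly the boundary bookkeeping the paper leaves implicit, and in particular the observation that when $l<2w$ the truncated $-2w$ is compensated by the at most $l-1$ left neighbours.
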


\begin{lemma}\label{lem-difference}
	For any permutation of $n$ elements and for each element $x$, if the difference between the computed rank and $x$ is at most $m$ for every element,
	then the difference between the computed position and $x$ is at most $2m$ for every element.
\end{lemma}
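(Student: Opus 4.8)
The plan is to bound the computed position $\sigma'(x)$ of each element directly, using that $\sigma'$ lists the elements in non-decreasing order of their computed rank. Abbreviate $c(y):=computed\_rank(y)$; the hypothesis is precisely that $c(y)\in[y-m,\,y+m]$ for every element $y$. Since $\sigma'(x)$ is just the rank of $x$ in the ordering by $c$, I would establish a matching upper and lower bound on $\sigma'(x)$ and then combine them.

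For the upper bound, note that every element occupying a position at most $\sigma'(x)$ in $\sigma'$ has computed rank at most $c(x)$. Whenever $c(y)\le c(x)$, the two hypotheses $y-m\le c(y)$ and $c(x)\le x+m$ give $y\le x+2m$. Thus all the elements sitting weakly before $x$ have true rank in $\{1,\dots,x+2m\}$, so there are at most $x+2m$ of them, and therefore $\sigma'(x)\le x+2m$.

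For the lower bound I would instead count the elements forced to precede $x$ \emph{regardless} of how ties are broken, i.e.\ those with $c(y)<c(x)$. If $y<x-2m$, then $c(y)\le y+m<x-m\le c(x)$, so every such $y$ lies strictly before $x$ in $\sigma'$. The elements $1,\dots,x-2m-1$ all qualify, giving at least $x-2m-1$ predecessors and hence $\sigma'(x)\ge x-2m$. Together with the upper bound, this yields $|\sigma'(x)-x|\le 2m$, as claimed.

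The one point I would treat with care is the arbitrary tie-breaking used when placing the elements into $\sigma'$: the upper bound must be phrased for the elements with $c(y)\le c(x)$ (those that could sit weakly before $x$) and the lower bound for the elements with $c(y)<c(x)$ (those that must sit strictly before $x$), so that both estimates hold for every admissible tie-breaking. The boundary situations $x-2m\le 0$ and $x+2m>n$ require no separate treatment, since computed positions always lie in $\{1,\dots,n\}$ and the corresponding estimate is then vacuous.
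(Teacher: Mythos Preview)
Your argument is correct and is essentially the same counting argument the paper intends: since the paper only says the proof is ``analogue to the proof of Lemma~\ref{lem-diff},'' the implied argument bounds $|\sigma'(x)-c(x)|\le m$ by counting elements with computed rank below/above $c(x)$ and then applies the triangle inequality with $|c(x)-x|\le m$, whereas you merge these two steps and bound $\sigma'(x)$ directly in terms of $x$. Your careful handling of tie-breaking (using $c(y)\le c(x)$ for the upper bound and $c(y)<c(x)$ for the lower bound) is exactly what is needed.
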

The proof of this lemma is analogue to the proof of Lemma~\ref{lem-diff} below.

\begin{lemma}\label{lem-diff}
	For every permutation $\sigma$ and window size $w$, the offset of every element $x$ is at most $2w$,  
	\[
	|computed\_rank(x) - \sigma'(x)| \leq 2w \enspace .
	\]
\end{lemma}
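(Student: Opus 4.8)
The plan is to establish the two one-sided bounds $\sigma'(x) \le computed\_rank(x) + 2w$ and $\sigma'(x) \ge computed\_rank(x) - 2w$ separately, and then combine them. The only tool I expect to need beyond elementary counting is Fact~\ref{fact}, which lets me translate a statement about computed ranks into a statement about original positions, together with the fact that $\sigma$ is a permutation, so that the original positions are exactly the distinct integers $1, \dots, n$.

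First I would fix an element $x$ and write $r = computed\_rank(x)$. Since the elements are placed into $\sigma'$ in non-decreasing order of computed rank (with ties broken arbitrarily), the position $\sigma'(x)$ of $x$ is at most the number of elements whose computed rank is at most $r$. For any such element $y$, Fact~\ref{fact} gives $\sigma(y) \le computed\_rank(y) + 2w \le r + 2w$. Because distinct elements occupy distinct original positions, at most $r + 2w$ of them can have original position at most $r + 2w$, and hence $\sigma'(x) \le r + 2w$.

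For the reverse inequality I would argue symmetrically from the top. The elements occupying positions at least $\sigma'(x)$ in $\sigma'$ number exactly $n - \sigma'(x) + 1$, and each of them has computed rank at least $r$. For such an element $y$, Fact~\ref{fact} gives $\sigma(y) \ge computed\_rank(y) - 2w \ge r - 2w$, so each has original position at least $r - 2w$. Since at most $n - (r - 2w) + 1$ distinct original positions satisfy this, I obtain $n - \sigma'(x) + 1 \le n - (r - 2w) + 1$, which rearranges to $\sigma'(x) \ge r - 2w$. Combining the two bounds yields $|computed\_rank(x) - \sigma'(x)| \le 2w$, as claimed.

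I do not expect a genuine obstacle here; the argument is a clean pigeonhole built on Fact~\ref{fact}. The only points requiring a little care are the tie-breaking, which is handled by using the threshold $\le r$ in the upper bound and $\ge r$ in the lower bound so that $x$ and its tied neighbours are counted consistently, and the boundary cases where $r - 2w < 1$ or $r + 2w > n$. In those boundary cases the count of admissible original positions is strictly smaller than $r + 2w$ or $n - (r - 2w) + 1$, so the inequalities only become easier and the claimed bound still holds. (The same counting argument, with $\sigma(y)$ replaced by the true rank $y$, gives the proof of Lemma~\ref{lem-difference} referred to above.)
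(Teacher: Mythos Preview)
Your argument is correct and follows essentially the same route as the paper's proof: both bound $\sigma'(x)$ between the count of elements with computed rank strictly less than $r$ and the count with computed rank at most $r$, then invoke Fact~\ref{fact} to translate these counts into bounds on original positions in $\sigma$. Your write-up is simply more explicit about the pigeonhole step, the tie-breaking, and the boundary cases, all of which the paper leaves implicit in its two-sentence proof.
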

\begin{proof}
Let the computed rank of $x$ be $k$. The computed position $\sigma'(x)$ is larger than the number of elements with computed rank smaller than $k$,
and at most the number of elements with computed ranks at most $k$.
By Fact \ref{fact},
every element $y$ with $\sigma(y)<k-2w$ has a computed rank smaller than $k$,
and every element $y$ with $\sigma(y)>k+2w$ has a computed rank larger than $k$.  
\end{proof}

\begin{lemma}\label{lem-offset}
	Consider a generic iteration of the algorithm with  permutation $\sigma$ and a window size $w$.
	In this iteration, the position of each element changes by at most $4w$. Moreover, the position of each element changes by at most $8w$ until the algorithm terminates. 
\end{lemma}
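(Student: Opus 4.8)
The plan is to bound the per-iteration displacement of any element using the triangle inequality, routing the move through the element's computed rank, and then to obtain the cumulative bound by summing a geometric series over the successive halving window sizes.

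For the first claim, I would fix an element $x$ and observe that in the current iteration its position moves from $\sigma(x)$ to $\sigma'(x)$. The idea is to insert $computed\_rank(x)$ as an intermediate quantity and write
\[
|\sigma'(x) - \sigma(x)| \leq |\sigma'(x) - computed\_rank(x)| + |computed\_rank(x) - \sigma(x)| .
\]
The first term on the right is precisely the offset of $x$, which is at most $2w$ by Lemma~\ref{lem-diff}, and the second term is at most $2w$ by Fact~\ref{fact}. Summing, the displacement in a single iteration is at most $4w$. I note that both Fact~\ref{fact} and Lemma~\ref{lem-diff} hold for every permutation and every window size with no extra hypotheses, so this bound is unconditional, exactly as required by the statement.

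For the second claim, I would apply the single-iteration bound to the current iteration and to all subsequent ones. Since the window halves each time, the relevant window sizes are $w, w/2, w/4, \dots$, and the per-iteration displacements are bounded by $4w, 4(w/2), 4(w/4), \dots$ respectively. Chaining these moves by the triangle inequality, the total change of position of $x$ from the start of the current iteration until the algorithm terminates is at most
\[
\sum_{i=0}^{\infty} \frac{4w}{2^i} = 4w \sum_{i=0}^{\infty} 2^{-i} = 8w ,
\]
where the genuinely finite number of remaining iterations is safely dominated by the infinite geometric sum.

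I do not expect a real obstacle here: the whole argument reduces to the two previously established inequalities plus the geometric decay of the window size. The only point needing mild care is to set up the geometric series correctly, so that the contribution $4w$ of the current iteration is doubled (and not more) by the convergent tail $2w + w + w/2 + \cdots = 4w$, giving the clean factor $8w$.
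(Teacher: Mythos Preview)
Your proposal is correct and matches the paper's proof essentially line for line: the paper also routes through $computed\_rank(x)$ via the triangle inequality, invokes Fact~\ref{fact} and Lemma~\ref{lem-diff} to get the two $2w$ contributions, and then sums the geometric series $\sum_{i\geq 0} 4w/2^i = 8w$ for the cumulative bound.
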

\begin{proof}
By Fact \ref{fact}, Lemma~\ref{lem-diff} and triangle inequality,
$$\lvert \sigma(x) - \sigma'(x) \rvert \leq \lvert computed\_rank(x) - \sigma(x) \rvert + \lvert computed\_rank(x) - \sigma'(x) \rvert \leq 2w + 2w =4w.$$
Since $w$ is halved after every iteration, the final difference is at most $\sum_{i=0}^{\infty}\frac{4w}{2^i}=8w$.
\end{proof}

Finally, we conclude \Cref{th:window-sort:error-analysis} and show that the  dislocation of an element is small if the number of errors is small:

\begin{proof}[Proof of \Cref{th:window-sort:error-analysis}]
	Consider an iteration of the algorithm with current window size $w$. 
	We show that, if \INV{} holds for all elements  in the current iteration, then \eqref{eq:error-condition} implies that \INV\ also holds for all elements in the next iteration, i.e., when the window size becomes $w/2$. 
	In order for \INV{} to hold for the next iteration, 
	the computed \emph{position} of each element should differ from the true rank by at most ${w}/{2}$,
	\[
	|\sigma'(x) - x| \leq w/2 \enspace .
	\]
	By Lemma \ref{lem-difference},
	it is sufficient to require that the computed \emph{rank} of each element differs from its true rank by at most ${w}/{4}$, 
	\[
	|computed\_rank(x) - x|  \leq w/4
	\]
	By Lemma~\ref{lem-analyzable}, the above inequality follows from the hypothesis $|\err{x}{w}| \leq w/4$.
	
	We have thus shown that after the iteration with window size $2w^*$, all elements have dislocation at most $w^*$. By Lemma \ref{lem-offset}, the subsequent iterations will move each element by at most $8w^*$ positions.
\end{proof}

\begin{remark}
	If we care only about the maximum dislocation, then we could obtain a better bound of $w$ by simply stopping the algorithm at the iteration where the window size is $w$ (for a $w$ which guarantees the condition above with high probability). In order to bound also the total dislocation, we let the algorithm continue all the way until window size $w=1$. This will allow us to show that the total dislocation is linear in expectation.
\end{remark}

\section{Maximum Dislocation}\label{sec-max}
In this section we give a bound on the maximum dislocation of an element after running \WindowSort\ on $n$ elements. We prove that it is a function of $n$ {and} of the probability $p$ that a single comparison fails. Our main result is the following:
\begin{theorem}\label{thm-maximum-dislocation}
	For a set of $n$ elements, with probability $1-{1}/{n}$, the maximum dislocation after running \WindowSort\
	is $9\cdot f(p)\cdot \log n$ where
	\[ f(p) =
	 \begin{cases}~
          \frac{400p}{(1-32p)^2} &  \quad  \text{for} \quad 1 /64 < p < 1/32,\\~
   	 \frac{4}{\ln \left(\frac{1}{32p}\right)-\left(1-32p\right)} &  \quad  \text{for} \quad 1/192 < p \leq 1/64,\\~
         6 & \quad  \text{for} \quad  p \leq 1/192.
	  \end{cases}
	\]
\end{theorem}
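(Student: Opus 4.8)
The plan is to reduce the theorem to verifying the hypothesis of \Cref{th:window-sort:error-analysis}. Setting $w^\star = f(p)\log n$, it suffices to show that, with probability at least $1-1/n$, the error condition $|\err{x}{w}| \le w/4$ holds simultaneously for every element $x$ and every window size $w \in \{n/2, n/4, \ldots, 2w^\star\}$: \Cref{th:window-sort:error-analysis} then guarantees maximum dislocation at most $9w^\star = 9f(p)\log n$, exactly the claimed bound. Since the recurrent errors are fixed random variables, this is purely a statement about their joint distribution, so I would bound the failure probability by a union bound over all pairs $(x,w)$.

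First I would analyze a single pair $(x,w)$. The set $\err{x}{w}$ records the erroneous comparisons between $x$ and the at most $8w$ distinct elements $y \in [x-4w, x+4w]$; because distinct pairs carry independent errors, $|\err{x}{w}|$ is a sum of at most $8w$ independent Bernoulli$(p)$ variables, the worst case being mean $\mu = 8wp$, which I adopt henceforth (a larger number of present comparisons only inflates the upper tail relative to the fixed threshold $w/4$). The recurrence of errors is irrelevant here, since within one window each $y$ appears once; it only forbids assuming independence across different $w$, which the union bound does not require. Writing the threshold as $w/4 = (1+\delta)\,8wp$ yields the deviation factor $1+\delta = \frac{1}{32p}$, i.e.\ $\delta = \frac{1-32p}{32p}$, and note $\delta > 0$ precisely when $p < 1/32$, the regime in which the algorithm is analyzed.

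The heart of the argument, and the step I expect to be most delicate because of the bookkeeping of constants, is to feed the appropriate Chernoff tail bound into each of the three $p$-ranges. For $1/64 < p < 1/32$ one has $0 < \delta < 1$, so the relaxed bound $\Pr[X \ge (1+\delta)\mu] \le e^{-\delta^2\mu/3}$ applies; substituting $\mu = 8wp$ and $\delta = \frac{1-32p}{32p}$ turns the exponent into a constant multiple of $\frac{(1-32p)^2}{p}\,w$, which after the union bound forces $w^\star = \Theta\!\big(\frac{p}{(1-32p)^2}\log n\big)$ and yields the first case $f(p) = \frac{400p}{(1-32p)^2}$. For $p \le 1/64$ one has $\delta \ge 1$, where that relaxation is too lossy, so I would instead use the sharp form $\Pr[X \ge (1+\delta)\mu] \le \big(e^{\delta}/(1+\delta)^{1+\delta}\big)^{\mu}$; a short computation collapses its exponent to $\frac{w}{4}\big(\ln\frac{1}{32p} - (1-32p)\big)$, giving the middle case $f(p) = \frac{4}{\ln(1/32p) - (1-32p)}$. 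Finally, for $p \le 1/192$ the quantity $\ln\frac{1}{32p} - (1-32p)$ is bounded below by a positive constant, so a fixed window $w^\star = 6\log n$ already drives the tail below $n^{-2}$, giving the third case $f(p) = 6$.

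It remains to assemble the union bound. Summing the single-pair failure probability over the at most $n$ elements and over the $O(\log n)$ admissible window sizes, the geometric decay of the exponent in $w$ makes the sum dominated, up to a constant factor, by its smallest-window term $w = 2w^\star$. Requiring this dominant term to be at most $1/n$ is exactly the inequality that pins down the constants $400$, $4$, and $6$ above; the care needed is to choose them generously enough that the $\log n$ factor from the number of windows and the $O(1)$ slack in the required $2\ln n$ are absorbed, so that the clean closed forms for $f(p)$ hold.
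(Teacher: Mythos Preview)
Your proposal is correct and follows essentially the same route as the paper: reduce to \Cref{th:window-sort:error-analysis} with $w^\star=f(p)\log n$, bound $\Pr\big[|\err{x}{w}|>w/4\big]$ for a single pair via Chernoff with $\mu=8wp$ and $1+\delta=1/(32p)$, then union-bound over all $x$ and all admissible $w$. The paper packages the single-pair bound as a lemma giving $\Pr\le n^{-3}$ for each pair and then uses a crude union bound over $n\cdot O(\log n)$ pairs, whereas you invoke the geometric decay in $w$ to collapse the sum to its smallest-window term; both work. The one substantive difference is the third regime $p\le 1/192$: the paper does not argue that the middle-case exponent $\ln\tfrac{1}{32p}-(1-32p)$ is bounded below by a constant, but instead switches to the Chernoff form $\Pr[X\ge R]\le 2^{-R}$ valid for $R\ge 6\mu$ (here $R=w/4\ge 6\cdot 8wp$ exactly when $p\le 1/192$), which yields $\Prob(w)\le 2^{-w/4}$ and makes the constant~$6$ in $f(p)$ immediate. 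Your alternative is perfectly sound, but if you carry it out you should check that the resulting constant indeed comes out at most~$6$; the paper's choice of bound makes that arithmetic cleaner.
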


It is enough to prove that the condition in  Theorem~\ref{th:window-sort:error-analysis} holds for all  $w\geq 2f(p)\log  n$ with probability at least $1 - 1/n$.

\begin{lemma}\label{le:error-condition}
For every fixed element $x$ and for every fixed window size $w\geq 2f(p)\log n$, the probability that
	\begin{equation}
	|\err{x}{w}| > w/4 \label{eq:error-condition-negated}
	\end{equation}
	is at most $1/n^3$.
\end{lemma}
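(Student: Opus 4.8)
The plan is to read $|\err{x}{w}|$ as a binomial tail and control it with a Chernoff bound, then check case by case that the definition of $f(p)$ is calibrated so that the tail falls below $1/n^3$. By the definition of $\err{x}{w}$, for each of the at most $8w$ elements $y \in [x-4w,x+4w]\setminus\{x\}$ lying in $\{1,\dots,n\}$ there is a single comparison between $x$ and $y$, and by the error model these comparisons are erroneous independently with probability $p$. Hence $|\err{x}{w}|$ is stochastically dominated by a $\mathrm{Bin}(8w,p)$ variable, and it suffices to bound $\Prob[X > w/4]$ for $X \sim \mathrm{Bin}(8w,p)$ with mean $\mu = 8wp$. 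Since $p<1/32$ we have $\mu < w/4$, so the threshold exceeds the mean and we are genuinely in the upper-tail regime; note that the relevant ratio is $\frac{w/4}{\mu} = \frac{1}{32p}$, which is exactly what makes $32p$ appear in $f(p)$.

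First I would apply the multiplicative Chernoff bound $\Prob[X \ge a] \le \exp(a - \mu - a\ln(a/\mu))$ with $a = w/4$. Substituting $a/\mu = 1/(32p)$ and $\mu = 8wp$ collapses the exponent and yields
\[
\Prob\!\left[\,|\err{x}{w}| > \tfrac{w}{4}\,\right] \le \exp\!\left(-\frac{w}{4}\Bigl(\ln\tfrac{1}{32p} - (1-32p)\Bigr)\right),
\]
whose rate $I(p) := \ln\tfrac{1}{32p} - (1-32p)$ is strictly positive for $p < 1/32$ because $\ln y < y-1$. What remains is deterministic: show that $w \ge 2f(p)\log n$ forces $\tfrac{w}{4}I(p)$ above (essentially) $3\ln n$, so that the right-hand side drops to $1/n^3$. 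The three branches of $f(p)$ correspond to three estimates of $I(p)$:
\begin{itemize}
  \item For intermediate $p$ (the range $1/192 < p \le 1/64$) I would keep $I(p)$ as is; since $f(p)=4/I(p)$ is the reciprocal of the rate up to the bookkeeping constants, the exponent becomes a fixed multiple of $\log n$.
  \item For $p$ near the threshold ($1/64 < p < 1/32$) the rate $I(p)\to 0$ and this bound degrades, so I would instead use a variance-aware (Bernstein-type) bound, equivalently the expansion $I(p) = \Theta\bigl((1-32p)^2\bigr)$ around $p=1/32$; this is exactly why $f(p)$ takes the form $\frac{400p}{(1-32p)^2}$.
  \item For small $p$ ($p \le 1/192$) the rate is bounded away from $0$ and the threshold sits a constant factor above the mean, so the crude constant $f(p)=6$ already suffices.
\end{itemize}

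The hard part will be the near-threshold regime: as $p\to 1/32$ the mean $\mu = 8wp$ approaches the threshold $w/4$, the deviation $t = \tfrac{w}{4}(1-32p)$ shrinks, and the plain multiplicative bound is too weak, so one must descend to the quadratic approximation of the rate (a Bernstein bound with variance proxy $\sigma^2 \le \mu$) to recover the $1/(1-32p)^2$ blow-up. The remaining difficulty is purely quantitative: the constants $4$, $400$ and $6$ must be large enough that the Chernoff exponent clears $3\ln n$ for the intended base of the logarithm, so I would track the exact binomial rate $8w\,D(1/32\,\|\,p)$ rather than its Poissonised form in the borderline middle range to avoid losing the constant; everything else is routine algebra.
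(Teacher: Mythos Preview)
Your proposal is correct and follows essentially the same approach as the paper: both read $|\err{x}{w}|$ as (dominated by) a $\mathrm{Bin}(8w,p)$ tail, apply Chernoff bounds, and split the analysis into the same three $p$-ranges corresponding to the three branches of $f(p)$. The only cosmetic difference is packaging: the paper quotes three textbook Chernoff forms directly (the $e^{-\mu\delta^2/3}$ bound for $0<\delta<1$ in the near-threshold range, the $(e^\delta/(1+\delta)^{1+\delta})^\mu$ bound in the middle range, and the $2^{-R}$ bound for $R\ge 6\mu$ in the small-$p$ range), whereas you start from the single general Chernoff exponent $a-\mu-a\ln(a/\mu)$ and then recover the three regimes by Taylor/Bernstein-type estimates of the rate~$I(p)$; in particular, what you call the ``variance-aware'' near-threshold bound is exactly the paper's case~(i).
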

By the union bound, the probability that \eqref{eq:error-condition-negated} holds for some $x$ and for some $w$ is at most $1/n$. That is, the condition of Theorem~\ref{th:window-sort:error-analysis} holds with probability at least $1 - 1/n$ for all $w\geq 2w^\star=2f(p)\log n$, which then implies Theorem~\ref{thm-maximum-dislocation}. 

\subsection{Proof of Lemma~\ref{le:error-condition}}
Since each comparison fails with probability $p$ independently of the other comparisons, the probability that the event in \eqref{eq:error-condition-negated} happens is equal to the probability that at least ${w}/{4}$ errors occur in $8 w$ comparisons. We denote such probability as $\Prob(w)$, and show that 
$
\Prob(w) \leq 1/n^3 \enspace.
$

We will make use of the following standard Chernoff Bounds (see for instance in \cite{MitzenmacherU05}):
\begin{theorem}[Chernoff Bounds]\label{thm-chernoff} 
Let $X_1,\cdots,X_n$ be independent Poisson trials with $\Pr(X_i)=p_i$. Let $X=\sum_{i=1}^nX_i$ and $\mu=\E[X]$.  Then the following bounds hold:
        \begin{align}
       		(i)& \text{ For $0< \delta < 1$,}&&
       		\Pr(X\geq (1+\delta )\mu )\leq e^{-\frac{\mu\delta^2}{3}},\\
            (ii)&  \text{ For any  $\delta >0$,}&&
            \Pr(X\geq (1+\delta )\mu )< \bigg(\frac{e^\delta }{(1+\delta)^{(1+\delta )}}\bigg)^\mu,\\
            (iii)& \text{ For $R\geq 6\mu$,}&& \Pr(X\geq R )\leq 2^{-R}.
        \end{align}
\end{theorem}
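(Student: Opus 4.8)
The plan is to derive all three bounds from a single master inequality obtained by the exponential moment method, and then specialize it. First I would fix any $t>0$ and apply Markov's inequality to the nonnegative random variable $e^{tX}$, writing $\Pr(X \geq a) = \Pr(e^{tX} \geq e^{ta}) \leq e^{-ta}\,\E[e^{tX}]$. Since the $X_i$ are independent, $\E[e^{tX}] = \prod_{i=1}^n \E[e^{tX_i}]$, and for each Poisson trial $\E[e^{tX_i}] = 1 + p_i(e^t-1) \leq e^{p_i(e^t-1)}$ using $1+z \leq e^z$. Multiplying these factors and recalling $\mu = \sum_i p_i$ yields the master bound $\Pr(X \geq a) \leq \exp\!\big((e^t-1)\mu - ta\big)$, valid for every $t>0$.

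For bound $(ii)$ I would set $a = (1+\delta)\mu$ and take the exponent that minimizes the right-hand side, namely $t = \ln(1+\delta)$, which is positive precisely when $\delta>0$. Substituting $e^t-1 = \delta$ and $t = \ln(1+\delta)$ into the master bound gives $\exp\!\big(\delta\mu - (1+\delta)\ln(1+\delta)\,\mu\big) = \big(e^\delta/(1+\delta)^{(1+\delta)}\big)^\mu$, which is exactly $(ii)$.

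Bound $(i)$ then follows from $(ii)$ by an analytic estimate on the base. Taking logarithms, it suffices to show $\delta - (1+\delta)\ln(1+\delta) \leq -\delta^2/3$ for $0<\delta<1$. I would set $g(\delta) = \delta - (1+\delta)\ln(1+\delta) + \delta^2/3$ and prove $g(\delta) \leq 0$ on $[0,1)$. Here $g(0)=0$, the derivative is $g'(\delta) = \frac{2}{3}\delta - \ln(1+\delta)$ with $g'(0)=0$, and the second derivative $g''(\delta) = \frac{2}{3} - \frac{1}{1+\delta}$ changes sign only at $\delta=\frac{1}{2}$. Hence $g'$ decreases on $[0,\frac12]$ and increases on $[\frac12,1)$, yet remains nonpositive throughout because $g'(0)=0$ and $g'(1) = \frac{2}{3} - \ln 2 < 0$; thus $g$ is nonincreasing and $g(\delta) \leq g(0) = 0$. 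This calculus step is the part requiring the most care, since the derivative is not monotone and one must track its sign across $\delta=\frac12$ rather than relying on a single-sided convexity argument.

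Finally, bound $(iii)$ is a coarse specialization of $(ii)$. Writing $1+\delta = R/\mu \geq 6$ (so $\delta \geq 5 > 0$ and $(ii)$ applies), I would enlarge the numerator via $e^\delta \leq e^{1+\delta}$ to obtain $\big(e^\delta/(1+\delta)^{(1+\delta)}\big)^\mu \leq \big(e/(1+\delta)\big)^{(1+\delta)\mu} = \big(e/(1+\delta)\big)^{R}$, using $(1+\delta)\mu = R$. Since $1+\delta \geq 6$ and $e/6 < 1/2$, this is at most $(1/2)^R = 2^{-R}$, giving $(iii)$.
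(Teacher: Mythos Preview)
Your proof is correct and follows the standard exponential--moment (Cram\'er--Chernoff) method: Markov's inequality applied to $e^{tX}$, optimized in $t$, followed by the appropriate analytic estimates. Each step checks out, including the calculus for part~(i) where you correctly track the sign of $g'$ across the inflection at $\delta=1/2$, and the crude but sufficient relaxation $e^\delta \le e^{1+\delta}$ for part~(iii).

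The paper itself does not prove this theorem; it is stated as a standard tool and attributed to Mitzenmacher and Upfal~\cite{MitzenmacherU05}. Your argument is essentially the textbook derivation one finds there, so there is no methodological contrast to draw --- you have simply supplied the omitted proof.
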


\begin{lemma}\label{lem-pr-fail}
	The probability $\Prob(w)$ (at least ${w}/{4}$ errors occur in $8 w$ comparisons) satisfies
\[ \Prob(w) \leq
  \begin{cases}~
          e^{-\frac{w\left(1-32p\right)^2}{384p}} & \quad  \text{for} \quad1 /64 < p < 1/32,\\[6pt]
    \bigg( \frac{e^{\frac{1-32p}{32p}}}{\left(\frac{1}{32p}\right)^{\frac{1}{32p}}}\bigg)^{8 w p}&  \quad  \text{for}  \quad 1/192 < p \leq 1/64, \\[12pt]
         2^{-\frac{w}{4}}   &  \quad  \text{for} \quad  p \leq 1/192.
  \end{cases}
\]
\end{lemma}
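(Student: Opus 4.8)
The plan is to express the event as a deviation of a sum of independent Bernoulli variables above its mean, and then apply the three Chernoff bounds of Theorem~\ref{thm-chernoff}, one per regime of $p$. Let $X$ denote the number of errors among the $8w$ comparisons. Since each comparison fails independently with probability $p$, the variable $X$ is a sum of $8w$ independent Poisson trials each with success probability $p$, so $\mu = \E[X] = 8wp$, and we must bound $\Prob(w) = \Pr(X \geq w/4)$.

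First I would rewrite the threshold $w/4$ in the multiplicative form required by the bounds. Setting $w/4 = (1+\delta)\mu$ gives $1+\delta = \frac{w/4}{8wp} = \frac{1}{32p}$, hence $\delta = \frac{1-32p}{32p}$. Since all three cases assume $p < 1/32$, we have $\delta > 0$, so the threshold genuinely exceeds the mean and the bounds are meaningful. The entire proof then reduces to matching each regime of $p$ to the side condition of the appropriate Chernoff bound.

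For $1/64 < p < 1/32$, the inequality $\delta < 1$ is equivalent to $p > 1/64$, so $0 < \delta < 1$ and bound~(i) applies, giving $\Pr(X \geq (1+\delta)\mu) \leq e^{-\mu\delta^2/3}$. Substituting $\mu = 8wp$ and $\delta = \frac{1-32p}{32p}$ simplifies the exponent to $\frac{w(1-32p)^2}{384p}$, which is the first case. For $1/192 < p \leq 1/64$, I would apply bound~(ii), valid for any $\delta > 0$, namely $\Pr(X \geq (1+\delta)\mu) < \left(\frac{e^\delta}{(1+\delta)^{(1+\delta)}}\right)^\mu$; substituting $1+\delta = \frac{1}{32p}$ and $\mu = 8wp$ yields the second case directly. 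For $p \leq 1/192$, I would use bound~(iii), which requires $R \geq 6\mu$: taking $R = w/4$, the condition $w/4 \geq 48wp$ is exactly $p \leq 1/192$, so bound~(iii) gives $\Pr(X \geq w/4) \leq 2^{-w/4}$, the third case.

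The only delicate point is to check that the thresholds on $p$ separating the three cases coincide \emph{precisely} with the side conditions $\delta < 1$ for bound~(i) and $R \geq 6\mu$ for bound~(iii); once these boundary computations are confirmed, everything else is routine algebraic simplification of the exponents. I do not expect any genuine obstacle here, as the cutoff values of $p$ have evidently been chosen so that each regime lands exactly in the validity range of its Chernoff bound.
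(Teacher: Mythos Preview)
Your proposal is correct and follows essentially the same argument as the paper: define $X$ with $\mu=8wp$, set $\delta=\frac{1-32p}{32p}$, and apply Chernoff bounds~(i), (ii), (iii) in the three regimes, where the cutoffs $p=1/64$ and $p=1/192$ correspond exactly to $\delta=1$ and $w/4=6\mu$. The only cosmetic difference is that the paper also remarks $\delta\geq 1$ in the second regime, which is not needed since bound~(ii) holds for all $\delta>0$.
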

\begin{proof}
Let the random variable $X$ denote the number of errors in the outcome of $8w$ comparisons. Clearly, $\E[X]=8w p$, and
\begin{align*}
	\Prob (w)&=\Pr \left[ X\geq \frac{w}{4}\right]  
	=\Pr\left[X\geq \frac{\E[X]}{32p}\right] 
	= \Pr\left[X\geq \left(1+\frac{1-32p}{32p}\right) \E[X]\right].
\end{align*} 
Let $\delta=\frac{1-32p}{32p}$. 
If $1/64 < p <1/32$, then $0< \delta< 1$, and by Theorem~\ref{thm-chernoff}, case~($i$), we have 
\[\Prob(w)\leq e^{-\frac{1}{3}\delta^2\mu}\leq   e^{-\frac{w\cdot\left(1-32p\right)^2}{384p}}.\]
Similarly, if $p \leq 1/64$, then $\delta \geq1$, and by Theorem~\ref{thm-chernoff}, case ($ii$), we have
\[\Prob(w)< \left(\frac{e^\delta}{(1+\delta)^{(1+\delta)}}\right)^\mu \leq  \bigg( \frac{{\scriptstyle e^{\frac{1-32p}{32p}}}}{{\scriptstyle \left(\frac{1}{32p}\right)}^{\frac{1}{32p}}}\bigg)^{8 w p}\ .\]
If $p\leq 1/192$, then ${w}/{4}\geq 48 w p =6\, \E[X]$, and by Theorem~\ref{thm-chernoff} case (iii),
$\Prob(w)\leq 2^{-\frac{w}{4}}.$
\end{proof}

\begin{lemma}\label{lemma-pr-cubic}
If $w\geq 2\, f(p)\log n$, with $n\geq 1$ and $f(p)$ as in Theorem~\ref{thm-maximum-dislocation}, then $\Prob(w)\leq 1/n^{3}$.
\end{lemma}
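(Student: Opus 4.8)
The plan is to prove the three regimes of $f(p)$ separately, matching the three cases of \Cref{lem-pr-fail}. In each case I would take the bound on $\Prob(w)$ given there, substitute the hypothesis $w \geq 2f(p)\log n$, and reduce the target $\Prob(w)\leq 1/n^3$ to an elementary numerical inequality. Since $1/n^3 = e^{-3\ln n}$, after taking natural logarithms the goal in every case becomes the same: show that the negated exponent of the bound is at least $3\ln n$. The one point requiring care throughout is that $\log$ denotes $\log_2$, so I would convert via $\log n = \ln n/\ln 2$ at the very end; the constants $400$, $4$, and $6$ appearing in $f(p)$ are tuned precisely so that this conversion goes through.

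For the first regime, $1/64 < p < 1/32$, I would start from $\Prob(w)\leq e^{-w(1-32p)^2/(384p)}$ and show its exponent is at least $3\ln n$. Substituting $w\geq 2f(p)\log n = \frac{800p}{(1-32p)^2}\log n$, the factor $(1-32p)^2/p$ cancels and the exponent becomes $\frac{800}{384}\log n$; it then remains only to check $\frac{800}{384}\log n\geq 3\ln n$, i.e. $\frac{800}{384\ln 2}\geq 3$, which holds. This case is purely mechanical once the cancellation is noticed.

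The second regime, $1/192 < p\leq 1/64$, is the crux and the main obstacle, because here the bound comes from Chernoff case (ii) and its exponent is not a simple monomial. Writing $\beta = 1/(32p)$, I would first establish the algebraic identity
\[
8wp\bigl[\beta\ln\beta-(\beta-1)\bigr]=\frac{w}{4}\bigl[\ln(1/(32p))-(1-32p)\bigr],
\]
using $8wp\beta = w/4$. The bracket on the right is exactly the denominator of $f(p)$, hence equal to $4/f(p)$, so the negated exponent simplifies cleanly to $w/f(p)$. Substituting $w\geq 2f(p)\log n$ then reduces the entire claim to $2\log n\geq 3\ln n$. This is the delicate step: the margin here is the tightest of the three cases, and it is precisely where the base of the logarithm and the constant $4$ in $f(p)$ must be tracked carefully rather than treated loosely.

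Finally, for $p\leq 1/192$ with $f(p)=6$, I would use $\Prob(w)\leq 2^{-w/4}$ and require $\frac{w}{4}\geq 3\log_2 n$, i.e. $w\geq 12\log_2 n$; since the hypothesis gives $w\geq 2f(p)\log n = 12\log n$, this holds directly with the required threshold matching the hypothesis threshold, so no further work is needed. The corner case $n=1$ is trivial because then $1/n^3 = 1$. I expect essentially all the real content to sit in the second regime, since the other two collapse to one-line constant checks once the substitution is performed.
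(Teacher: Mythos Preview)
Your plan coincides with the paper's own approach: substitute $w\ge 2f(p)\log n$ into the three bounds of \Cref{lem-pr-fail} and check the resulting numerical inequalities. The paper in fact writes out only the first regime and declares ``the other two are similar''; your treatments of the first and third regimes are correct as written.

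The second regime, however, has a gap. Your identity $8wp\bigl[\beta\ln\beta-(\beta-1)\bigr]=\tfrac{w}{4}\bigl[\ln\tfrac{1}{32p}-(1-32p)\bigr]=w/f(p)$ is correct, so after substitution the target becomes precisely $2\log n\ge 3\ln n$. But with $\log=\log_2$ (the only reading compatible with your own check $\tfrac{800}{384\ln 2}\ge 3$ in the first regime and with $2^{-3\log n}=n^{-3}$ in the third) this says $2/\ln 2\approx 2.885\ge 3$, which is false by about $4\%$. The shortfall is not in your method but in the paper's constant: replacing the~$4$ in the middle branch of $f(p)$ by any value at least $6\ln 2\approx 4.16$ makes your argument go through verbatim. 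For the paper's application this is harmless, since the union bound over $n$ elements and $O(\log n)$ window sizes only needs $\Prob(w)=O(n^{-2-\varepsilon})$, which your computation does deliver; but the lemma exactly as stated is not quite provable with the given $f(p)$, and you should flag this rather than call the step merely ``delicate''.
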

\begin{proof}
We show the first case, the other two are similar. 
If $1/64 < p <1/32$, by Lemma~\ref{lem-pr-fail},
\[\Prob(w)\leq e^{-\frac{w\cdot\left(1-32p\right)^2}{384p}} \leq e^{-\frac{800}{384}\log n}\leq e^{-3\lg n}\leq 1/n^{3}.\qedhere\]
\end{proof}

\section{Total Dislocation}\label{sec-total}

In this section, we prove that \WindowSort\ orders $n$ elements such that their total dislocation is linear in $n$ times a factor which depends only on $p$: 

\begin{theorem}\label{thm:TD}
	For a set of $n$ elements, the expected total dislocation after running \WindowSort\ is at most $n \cdot 60 \, f(p) \log f(p)$.
\end{theorem}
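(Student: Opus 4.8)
The plan is to bound the expected total dislocation by summing over elements the expected contribution of each element, and to organize this sum by exploiting the following dichotomy for a fixed element $x$: either the maximum-dislocation analysis "kicks in" for $x$ (so its dislocation is controlled by a logarithmic-type quantity), or $x$ is one of the relatively few elements for which condition \INV\ fails at some window size, in which case we must account for its dislocation more crudely. Let $w^\star = f(p)\log n$ as in \Cref{thm-maximum-dislocation}; by that theorem, with probability $1-1/n$ every element has dislocation at most $9\cdot f(p)\cdot \log n$, so the contribution of this "good" event to the total dislocation is trivially $O(n\cdot f(p)\log n)$, which is already $O(n\log n)$ and not yet linear. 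The crucial improvement must therefore come from a finer, per-element accounting that replaces the global $\log n$ factor by a convergent sum.

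First I would set up a charging scheme at the level of a single window size $w$. Fix an element $x$ and a window size $w$. By \Cref{lem-analyzable}, whenever \first\ and \second\ hold for $x$, the computed-rank error is at most $|\err{x}{w}|$, and by \Cref{lem-difference} and \Cref{lem-offset} the eventual change in position of $x$ is controlled (up to a factor of order $8$) by the largest such error over the remaining iterations. The key observation to carry out is that the dislocation of $x$ at termination is, up to constants, governed by the \emph{smallest} window size $w$ at which \INV\ still holds for $x$ together with its neighbourhood: if \INV\ holds down to window size $w$, then by \Cref{lem-offset} the final position of $x$ differs from its true rank by $O(w)$. Hence I would define, for each element $x$, the random variable $W(x)$ equal to the smallest window size at which the error condition $|\err{x}{w}| \le w/4$ (equivalently \eqref{eq:error-condition}) is violated in $x$'s neighbourhood, and bound the dislocation of $x$ by $O(W(x))$.

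The core of the argument is then a tail estimate: I would show that $\Pr[W(x) \ge w]$ decays fast enough in $w$ that $\E[W(x)] = O(f(p)\log f(p))$ for each $x$, whence $\E[\text{total dislocation}] \le \sum_x O(\E[W(x)]) = O(n\cdot f(p)\log f(p))$, matching the stated bound $n\cdot 60\,f(p)\log f(p)$. Concretely, the probability that \eqref{eq:error-condition} is violated at window size $w$ is exactly $\Prob(w)$, the probability that at least $w/4$ errors occur in $8w$ comparisons, for which \Cref{lem-pr-fail} gives explicit exponential-in-$w$ bounds in each regime of $p$. I would substitute these bounds into $\E[W(x)] \le \sum_{w} w\cdot\Pr[W(x)=w] \le \sum_{w} \Prob(w)\cdot O(w)$, summed over the geometric sequence of window sizes $w=n/2,n/4,\dots,1$, together with a union bound over $x$ and its $O(w)$ neighbours. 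Because $\Prob(w)$ decays like $e^{-c(p)w}$ (or $2^{-w/4}$ in the small-$p$ regime), the sum $\sum_w w\,\Prob(w)$ is dominated by its first few terms once $c(p)\,w$ exceeds a constant, i.e.\ once $w \gtrsim 1/c(p)$; translating $1/c(p)$ back through the definition of $f(p)$ is what produces the $f(p)\log f(p)$ factor rather than $f(p)\log n$.

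The main obstacle I anticipate is the passage from the per-element tail bound to the claimed \emph{expectation} bound while keeping the $\log n$ factor from reappearing. Two technical points require care. First, the union bound over the $O(w)$ neighbours in condition \second\ inflates $\Prob(w)$ by a factor of $O(w)$, and I must verify that this polynomial factor is absorbed by the exponential decay of $\Prob(w)$ without degrading the final constant. Second, and more delicate, is correctly handling the low-probability regime where \INV\ fails for many elements simultaneously: here the conditional total dislocation can be as large as $\Theta(n^2)$, so the rare-event probability (of order $1/n^3$ per element from \Cref{lemma-pr-cubic}, hence $1/n^2$ overall) must be shown to multiply against this worst case to contribute only $O(n)$ to the expectation. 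Getting the bookkeeping of these two regimes to combine into a clean $O(n\,f(p)\log f(p))$ bound, with the stated constant $60$, is where the real work lies; the individual Chernoff estimates from \Cref{lem-pr-fail} are routine once the charging scheme and the summation structure are fixed.
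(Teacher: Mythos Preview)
Your overall strategy---replace the global $\log n$ by a per-element stopping-time analysis and show that the expected ``last good window'' for each element is $O(f(p)\log f(p))$---is exactly the right shape, and it is also the shape of the paper's argument. But there is a genuine gap in how you propose to carry it out.

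You define $W(x)$ via the event ``the error condition \eqref{eq:error-condition} fails somewhere in $x$'s neighbourhood at window $w$'' and then assert that the dislocation of $x$ is $O(W(x))$. The problem is that \INV\ for $x$ at window $w$ is a statement about $\sigma(x)$, which depends on the \emph{entire} history of the algorithm, not only on comparisons involving $x$ and its rank-neighbours. Concretely, the $4w$ elements adjacent to $x$ in $\sigma$ are whoever happens to be there after the previous iterations; if some distant element $z$ has many errors, it can drift into $x$'s window and, more importantly, it can shift the positions of $x$'s true neighbours so that conditions \first\ and \second\ fail for $x$ even though $|\err{x}{w}|$ is tiny. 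Your union bound ``over the $O(w)$ neighbours in condition \second'' does not address this, because \second\ for those neighbours in turn requires \INV\ for \emph{their} neighbours, and so on; without further structure the locality collapses into a global requirement and you are back to the $\log n$ bound.

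The paper closes this gap by introducing a genuinely local invariant \weakINV, which asks that \INV\ hold for all elements in the rank-interval $[x-12w,x+12w]$ and \first\ hold on $[x-10w,x+10w]$, and then proving (Lemma~\ref{lem-weak}) that \weakINV\ is \emph{self-sustaining}: if it holds for $x$ at window $w$ and the error condition holds for the $O(w)$ elements in that interval, then it holds for $x$ at window $w/2$, regardless of what happens outside the interval. This is the missing structural lemma in your proposal; once you have it, your tail-sum idea (or, as the paper does, a telescoping over ``phases'' $w_{i+1}=f(p)\log w_i$ using Lemma~\ref{lem-weak-pr}) finishes the proof. The $\log f(p)$ factor then comes out naturally as the approximate fixed point of $w\mapsto f(p)\log w$, which is where the phase recursion stabilises.
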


The key idea is to show that for an element $x$, only $O(w)$ elements adjacent to its (true) rank matter in  all upcoming iterations. If this holds, it is sufficient to keep the following \emph{weak invariant} for an element $x$ throughout all iterations: 
\begin{quote}
	\begin{itemize}
		\item[{\Large\weakINV}] This invariant consists of three conditions that have to be satisfied:
		\begin{itemize}
			\item[~]  
			\begin{enumerate}[(a)]
				\item $x$ satisfies condition \INV.
				\item All elements with original position in $[x-12w, x+12w]$ satisfy condition \INV.
				\item All elements with original position in $[x-10w, x+10w]$ satisfy condition \first.
			\end{enumerate}
		\end{itemize}
		
	\end{itemize}
\end{quote}
Note that if $x$ satisfies \weakINV, all elements lying in $[x-10w, x+10w]$ satisfy both \first\ and \second. 

The rest of this section is structured as follows: First we derive several properties of the weak invariant, then we prove an $n\log\log n$ bound on the expected total dislocation, and finally we extend the proof to achieve the claimed linear bound. 

\subsection{Properties of the Weak Invariant}\label{sub-weak} 

We start with the key property of the weak invariant \weakINV\ for some element $x$.

\begin{lemma}\label{lem-weak}
Let $\sigma$ be the permutation of $n$ elements and $w$ be the window size of some iteration in \WindowSort. If the weak invariant \emph{\weakINV}\ holds for an element $x$ in $\sigma$ and the computed rank of every element $y$ with $\sigma(y)\in[x-10w, x+10w]$ differs from $y$ by at most ${w}/{4}$, 
then \emph{\weakINV}\ still holds for $x$ in the permutation $\sigma'$ of next iteration with window size ${w}/{2}$.
\end{lemma}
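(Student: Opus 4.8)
The plan is to verify the three conditions (a), (b), (c) of \weakINV{} for $x$ with respect to the halved window $w/2$, all from a single master estimate on how far one element can move during a sorting step. Throughout I abbreviate $M=[x-10w,x+10w]$ (positions in $\sigma$) and write $r(y)$ for the computed rank of $y$. The starting observations are that, since \weakINV{} holds for $x$, every $y$ with $\sigma(y)\in M$ satisfies both \first{} and \second{} (the note following the definition), and by hypothesis $|r(y)-y|\le w/4$ for all such $y$; moreover, by Lemma~\ref{lem-offset} no element moves by more than $4w$ in this iteration, so every element whose position in $\sigma'$ is within $6w$ of $x$ had its position in $\sigma$ inside $M$. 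This is what lets me reduce all three conditions to elements of $M$, where I have full control.

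The core is a two–sided estimate on the new dislocation of a single element $z$ with $\sigma(z)\in M$. I would start from the identity $\sigma'(z)-z=\#\{y>z:\sigma'(y)<\sigma'(z)\}-\#\{y<z:\sigma'(y)>\sigma'(z)\}$, so that it suffices to bound the number of elements ending up on the ``wrong side'' of $z$, splitting them into those inside and outside $M$. For $y\in M$ the bound $|r(\cdot)-\cdot|\le w/4$ shows that the computed order agrees with the true order up to a blur of $w/2$ (if $y<z-w/2$ then $r(y)<r(z)$, and symmetrically), so the number of misordered elements inside $M$ on each side is at most $w/2$. For $y\notin M$ I would use \first{} (for $z$ and for $x$) together with Fact~\ref{fact} to show far elements land on the correct side of $z$: an element more than $2w$ to the left of $z$ in $\sigma$ has smaller true rank and computed rank below $r(z)$, while the few elements lying just outside $M$ are exactly those controlled by the wider radius $12w$ of condition (b), which pins their dislocation to at most $w$. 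The outcome is a pair of one–sided statements: $\sigma'(z)-z\le w/2$ whenever $z$ is not too far to the right, and $\sigma'(z)-z\ge -w/2$ whenever $z$ is not too far to the left, the two admissible ranges overlapping comfortably around $x$.

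From here the three conditions follow. Condition (a) is the instance $z=x$: since $\sigma(x)\in M$ and $x$ lies in the overlap of both ranges, both one–sided bounds apply and give $|\sigma'(x)-x|\le w/2$. For condition (b) I take any $z$ with $\sigma'(z)\in[x-6w,x+6w]$; by Lemma~\ref{lem-offset} its position in $\sigma$ is in $M$, and the offset bound of Lemma~\ref{lem-diff} ($|r(z)-\sigma'(z)|\le 2w$) together with $|r(z)-z|\le w/4$ forces its true rank into $[x-\frac{33}{4}w,\,x+\frac{33}{4}w]$; on this range the relevant one–sided estimate applies, while in the complementary (too far left/right) case the required inequality holds trivially because $\sigma'(z)$ is pinned inside $[x-6w,x+6w]$. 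Finally, condition (c) for $z$ with $\sigma'(z)\in[x-5w,x+5w]$ asks that every $y$ more than $w=2\cdot\frac{w}{2}$ positions to the left of $z$ in $\sigma'$ has $y<z$: if such a $y$ lies in the band $[x-6w,x+6w]$ I apply condition (b) to both $y$ and $z$ (dislocations $\le w/2$) to get $y\le\sigma'(y)+\frac{w}{2}<\sigma'(z)-\frac{w}{2}\le z$; if instead $\sigma'(y)<x-6w$ I bound $y$ using the one–sided lower estimate for $M$–elements and \first{} for elements left of $M$, and the symmetric argument handles the right side.

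The step I expect to be the main obstacle is the boundary analysis hidden in the second paragraph: controlling the elements whose position lies just outside $M$ but which can still be compared against, or placed near, an element $z$ sitting close to the edge of $M$. This is precisely why the invariant is stated with three different radii ($12w$ for \INV{}, $10w$ for \first{}, and the plain $w$ for $x$ itself) and why the conclusion only claims the smaller radii $6w$ and $5w$: the $4w$ buffer between the conclusion and $M$, together with the extra reach of condition (b), is what guarantees that the interfering elements near $\partial M$ are all well sorted. Making the bookkeeping of these near–boundary elements match the constants $\frac{w}{4}$, $\frac{w}{2}$, $w$ exactly is the delicate part; the rest is routine counting.
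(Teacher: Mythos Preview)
Your proposal is correct and follows essentially the same route as the paper: both arguments bound the new dislocation of every element in a neighbourhood of $x$ using the hypothesis $|r(y)-y|\le w/4$, then handle elements near or beyond the boundary via conditions (b) and (c) of the old invariant, and finally verify (a), (b), (c) for the halved window.

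The one organisational difference worth noting is that the paper avoids your ``one--sided'' caveats by working with the set $X=\{y:r(y)\in[x-8w,x+8w]\}$ rather than with $M$. Since every element with $\sigma$-position outside $M$ has computed rank outside $[x-8w,x+8w]$ (Fact~\ref{fact}), such elements can never be interleaved with elements of $X$ in $\sigma'$; this gives the two--sided bound $|\sigma'(y)-y|\le w/2$ for all $y\in X$ directly (the paper's claim (A)), without splitting into left/right ranges. The paper then gets (b) from $Z\subseteq X$ via Lemma~\ref{lem-diff}, and handles (c) by a three--way case split on $\sigma(y)$ (rather than on $\sigma'(y)$ as you do). Your inversion identity $\sigma'(z)-z=\#\{y>z:\sigma'(y)<\sigma'(z)\}-\#\{y<z:\sigma'(y)>\sigma'(z)\}$ is exactly what underlies the ``same reasoning as Lemma~\ref{lem-diff}'' step, so the core is the same; the $X$ device just makes the bookkeeping you flag as ``the main obstacle'' almost disappear.
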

\begin{proof}
Consider the set $X$ of all elements $y$ with $computed\_rank(y)\in[x-8w, x+8w]$.
Their computed ranks differ from their original positions by at most $2w$.
Thus, all these elements are in the set $Y\supseteq X$ of all elements whose original positions are in $[x-10w, x+10w]$.
By the assumption of the lemma, for each element $y\in Y$, $\lvert computed\_rank(y) - y \rvert \leq {w}/{4}$.
Using the same reasoning as in the proof of Lemma \ref{lem-diff}, we  conclude that 
\begin{aquote}{A}
	for each element $y\in X$,  $\lvert \sigma'(y)-y \rvert \leq{w}/{2}$.
\end{aquote} 
Consider the set $Z$ of all elements $y$ with $\sigma'(y)\in[x-6w,x+6w]$.
By Lemma \ref{lem-diff}, their computed ranks lie in $[x-8w, x+8w]$, thus $Z\subseteq X$, and by (A), $\lvert \sigma'(y) - y \rvert\leq {w}/{2}$ for each $y\in Z$.
Thus, the second condition of \weakINV\ holds for the next iteration.

We continue with the third condition. Consider the set $T\subseteq Z$ of all elements $y$ with $\sigma'(y)\in [x-5w,x+5w]$. By the assumptions of the lemma, $y\in[x-5w-{w}/{2},x+5w+{w}/{2}]$ and $\sigma(y)\in[x-5w-{3w}/{2},x+5w+{3w}/{2}]$ for all $y\in T$.
It is sufficient to show that every element in $ T$ is larger (or smaller) than all elements whose computed positions are smaller than $x-6w$ (or larger than $x+6w$), the rest follows from the second condition. We show the former case, the latter is symmetric.
We distinguish three subcases: elements $y\in T$ with $\sigma'(y)<x-6w$ and with $\sigma(y)$ (i) smaller than $x-12$, (ii) between $x-12$ and $x-10w-1$, or (iii) between $x-10w$ and $x-4w-1$.
\begin{enumerate}[(i)]
	\item This case follows immediately from the third condition of \weakINV.
	\item This case follows immediately from the second condition of \weakINV.
	\item By the assumption of our lemma, $\lvert computed\_rank(y) - y \rvert \leq {w}/{4}$.
			Thus, if the computed rank $r$ of such an element $y$ is smaller than $x-6w$, then $y<x-6w+{w}/{4}$.
			Otherwise, if $r\geq kx-6w$, then by (A), $\lvert y-\sigma'(y) \rvert \leq {w}/{2}$. Thus, $y<x-6w+{w}/{2}$.
\end{enumerate}

Since we assume \weakINV\ for $x$, $\sigma(x) \in [x-w, x+w]$ and $computed\_rank(x)\in[x-3w,x+3w]$. By Lemma \ref{lem-diff}, $\sigma'(x)\in [k-5w, k+5w]$, and thus $x\in Z$, which implies that the first condition of \weakINV\ will still be satisfied for $x$ for the next iteration. This concludes the proof.
\end{proof}

Next, we adopt Lemma~\ref{lemma-pr-cubic} to analyze the probability of keeping the weak invariant for an element $x$ and an arbitrary window size through several iteration of \WindowSort.

\begin{lemma}\label{lem-weak-pr}
	Consider an iteration of \WindowSort\ on a permutation $\sigma$ on $n$ elements such that the window size is $w\geq 2\, f(p)  \log w$, where $f(p)$ is defined as in Theorem \ref{thm-maximum-dislocation}.
If the weak invariant \emph{\weakINV}\ for an element $x$ holds,
then with probability at least $1-{42}/{w^2}$, \emph{\weakINV}\ still holds for $x$  when the window size is $f(p) \log w$ (after some iterations of \WindowSort). 
\end{lemma}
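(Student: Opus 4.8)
<br>

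The plan is to bound the probability that the weak invariant \weakINV\ fails for $x$ at \emph{any} point during the iterations that take the window size from $w$ down to $f(p)\log w$, by combining \Cref{lem-weak} with a union bound over both the iterations and the elements that the invariant constrains.

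First I would observe that \Cref{lem-weak} gives a clean inductive step: provided \weakINV\ holds for $x$ at window size $w$ and provided the ``good-rank'' premise---that every element $y$ with $\sigma(y)\in[x-10w,x+10w]$ has $|computed\_rank(y)-y|\le w/4$---is met in that iteration, then \weakINV\ is preserved into the next iteration at window size $w/2$. So the whole statement reduces to controlling the probability that this good-rank premise \emph{ever} fails across the relevant iterations. The key point is that by \Cref{lem-analyzable}, for an element $y$ satisfying \first\ and \second\ (which, as noted right after the definition of \weakINV, is guaranteed for all $y\in[x-10w,x+10w]$ once \weakINV\ holds for $x$), the rank discrepancy $|computed\_rank(y)-y|$ is dominated by $|\err{y}{w}|$. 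Hence the good-rank premise is implied by $|\err{y}{w}|\le w/4$, which is exactly the event whose failure probability \Cref{lemma-pr-cubic} (via \Cref{lem-pr-fail}) bounds by $\Prob(w)\le 1/w^{3}$ whenever $w\ge 2f(p)\log w$.

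Next I would set up the double union bound. In a single iteration with window size $w$ there are at most $20w+1$ elements $y$ in the range $[x-10w,x+10w]$, so the probability that the good-rank premise fails in that iteration is at most $(20w+1)/w^{3}\le 21/w^{2}$ for $w$ large enough. The iterations of interest are those with window sizes $w, w/2, w/4, \ldots$, down to $f(p)\log w$; since the dominant term in a geometric sum of $c/(w/2^i)^2$ is the first one, summing $21/(w/2^i)^2$ over $i\ge 0$ gives a geometric series $21\cdot w^{-2}\sum_i 4^{-i}=21\cdot\tfrac{4}{3}\cdot w^{-2}=28/w^2$. I would tune the per-iteration count and the constants so that the total is at most $42/w^{2}$, which leaves comfortable slack and matches the stated bound. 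A subtlety to handle carefully is that \weakINV\ at the \emph{current} window size is what licenses applying \Cref{lem-analyzable} to the surrounding elements, so the induction must thread the invariant and the error bound together: I would argue by induction on the iteration index that as long as the good-rank event has held in all previous iterations, \weakINV\ continues to hold, and therefore \Cref{lem-analyzable} remains applicable in the current iteration.

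The main obstacle I anticipate is the bookkeeping around the shrinking range and the side condition $w\ge 2f(p)\log w$. As the window halves, the window size must remain large enough for \Cref{lemma-pr-cubic} to yield the $1/w^{3}$ tail bound at \emph{each} intermediate step, not merely at the starting size $w$; I would verify that this holds down to $f(p)\log w$ (which is why the stopping size is chosen at that scale rather than at a constant). The other delicate point is choosing which error set to invoke: because the elements being controlled lie in $[x-10w,x+10w]$ rather than being $x$ itself, I must ensure that for each such $y$ the relevant comparisons fall within $\err{y}{w}$, and that the independence across the at most $20w+1$ elements is legitimately used in the union bound---the errors are fixed per pair, but the events $|\err{y}{w}|\le w/4$ for distinct $y$ overlap in the pairs they reference, so I would be careful to phrase the argument purely as a union bound (which needs no independence) rather than accidentally relying on independence of these correlated events.
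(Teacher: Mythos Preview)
Your overall architecture---chain \Cref{lem-weak} inductively, reduce the ``good-rank'' premise to the error condition via \Cref{lem-analyzable}, and then take a union bound over the at most $20w_i+1$ relevant elements and over the iterations---is exactly the paper's approach. The inductive threading you describe and the caution about using only a union bound (not independence) are both appropriate.

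However, your summation is wrong in a way that breaks the bound. You write that the per-iteration failure probability is at most $21/(w/2^i)^2$ and then claim these terms form a geometric series with ratio $4^{-1}$, dominated by the first term. In fact
\[
\frac{21}{(w/2^i)^2}=\frac{21\cdot 4^{\,i}}{w^2},
\]
which is \emph{increasing} in $i$; the dominant term is the last one, and the sum is of order $1/(f(p)\log w)^2$, not $1/w^2$. That weaker bound would not be enough for the later phase analysis, where one multiplies by $9w$ and needs the product to be $O(1/w)$.

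The fix---and this is precisely what the paper does---is to apply \Cref{lemma-pr-cubic} with the role of $n$ played by the \emph{starting} window $w$, not by the current window $w_i$. Since every intermediate window satisfies $w_i\ge 2f(p)\log w$, monotonicity of $\Prob(\cdot)$ gives the uniform bound $\Prob(w_i)\le \Prob(2f(p)\log w)\le 1/w^3$ for all $i$. Now the decreasing geometric series is in the \emph{element counts}:
\[
\sum_{i=0}^{r}\Bigl(\tfrac{20w}{2^i}+1\Bigr)\,\Prob\!\bigl(\tfrac{w}{2^i}\bigr)
\;\le\;\frac{1}{w^3}\sum_{i=0}^{r}\frac{21w}{2^i}
\;\le\;\frac{42w}{w^3}=\frac{42}{w^2}.
\]
In short, you put the geometric decay on the wrong factor; bounding $\Prob$ uniformly by $1/w^3$ and summing the shrinking element counts is what yields $42/w^2$.
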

\begin{proof}
By Lemma~\ref{lem-weak}, the probability that \weakINV\ fails for $x$ before the next iteration is $(20w+1)\cdot \Prob(w)$.
Let $r=\log(\frac{w}{2\, f(p)\\log w})$, then the probability that \weakINV\ fails for $x$ during the iterations from window size $w$ to window size $f(p)\cdot \log w$ is
\begin{align*}
\sum_{i=0}^{r}\left(\frac{20w}{2^i}+1\right)\cdot\Prob\left(\frac{w}{2^i}\right) &\leq 
  \sum_{i=0}^{r}\left(\frac{21w}{2^i}\right)\cdot\Prob\left(2\, f(p) \log w\right) \leq 42w \cdot\Prob\left(2\, f(p) \log w\right),
\end{align*}
where the first inequality is by fact that $\Prob(w)$ increases when $w$ decreases.
By Lemma~\ref{lemma-pr-cubic}, $\Prob(2\, f(p) \log w)\leq {1}/{w^3}$, leading to the statement.
\end{proof}

\subsection{Double Logarithmic Factor (Main Idea)}\label{sub-d-log}
Given that \WindowSort\ guarantees maximum dislocation at most $9\, f(p) \log n$ with probability at least $(1-{1}/{n})$ (Theorem~\ref{thm-maximum-dislocation}), this trivially implies that the expected total dislocation is at most $O(f(p)\log n)$. More precisely, the expected dislocation is at most
\begin{align*}
\left({1}/{n}\right)\cdot n \cdot n + \left(1-{1}/{n}\right)\cdot n \cdot 9 \, f(p) \log n\leq n\cdot (1+9\,f(p) \log n)\, ,
\end{align*}
since a fraction $1/n$ of the elements is dislocated by at most $n$, while the others are dislocated by at most $9\cdot f (p)  \log n$.

We next describe how to improve this to  $O(f(p)\log\log n)$ by considering in the analysis \emph{two phases} during the execution of the algorithm:
\begin{itemize}
	\item \textbf{Phase 1:} The first phase consists of the iterations up to window size $w=f(p)\log n$. With probability at least $(1-1/n)$ all elements satisfy \INV\ during this phase.
	\item \textbf{Phase 2:} The second phase consists of the executions up to window size $w' = f(p)\log w$. If all elements satisfied \INV\ at the end of the previous phase, then the probability that a fixed element violates \weakINV\ during this second phase is at most $42/w^2$. 
\end{itemize}
More precisely, by \Cref{thm-maximum-dislocation} and the proof of \Cref{th:window-sort:error-analysis}, the
probability that \INV\ holds for all elements when the window size is $f(p) \log n$ is at least $(1-{1}/{n})$. We thus  restart our analysis with $w=f(p) \log n$ and the corresponding permutation $\sigma$.
Assume an element $x$ satisfies \weakINV. 
By Lemma~\ref{lem-weak-pr},
the probability that \weakINV\ fails for $x$ before the window size is $f(p) \log w$ is at most ${42}/{w^2}$. 
By Lemma~\ref{lem-offset},
an element moves by at most $8w$ positions from its original position, which is at most $w$ apart from its true rank. Therefore, the expected dislocation of an element $x$ is at most
\begin{equation*}
\left({1}/{n}\right)\cdot n+  {42}/{w^2}\cdot 9w +  9 \, f(p) \log w = O(1) + 9 \, f(p) \log (f(p) \log n)\ , 
\end{equation*}
where the equality holds for sufficiently large $n$  because $w=f(p)\log n$.

\subsection{Linear Dislocation (Proof of Theorem \ref{thm:TD})}\label{sub-logarithmic-dislocation}
In this section, we apply a simple idea to decrease the upper bound on the expected total dislocation after running \WindowSort\ on $n$ elements to $60 \, f(p) \log f(p)$.
We recurse the analysis from the previous Section \ref{sub-d-log} for several \textit{phases}:
Roughly speaking, an \textit{iteration} in \WindowSort\ halves the window size,
a \textit{phase} of iterations logarithmizes the window size.

\begin{itemize}
	\item \textbf{Phase 1}: Iterations until the window size  is $f(p)\log n$.
	\item \textbf{Phase 2:} Subsequent iterations until the window size  is $f(p)\cdot\log( f(p)\log n)$.
	\item \textbf{Phase 3:} Subsequent iterations until the window size is $ f(p)\cdot\log(f(p)\cdot \log (f(p)\log n))$.
	\item \dots
\end{itemize}

We bound the expected dislocation of an element $x$,
and let $w_i$ denote the window size after the $i$-th phase. 
We have $w_0=n$, $w_1=f(p) \log n$, $w_2=f(p) \log ( f(p) \log n)$, and 
\begin{equation}\label{eq:phase}
w_{i+1}= f(p)  \log w_i,
\end{equation} if $i\geq 1$ and $w_i \geq 2\, f(p)  \log w_i$.
Any further phase would just consist of a single iteration. In the remaining of this section, we only consider phases $i$ for which \Cref{eq:phase} is true, and we call them the \emph{valid phases}.

By Lemma~\ref{lem-weak-pr},
if the weak invariant \weakINV\ holds for $x$ and window size $w_{i-1}$, 
the probability that it still holds for window size $w_{i}$
is at least $1-{42}/{w_{i-1}^2}$.
Similarly to the analysis in the Section~\ref{sub-d-log}, we get that a valid phase $i\geq 1$ contributes to the expected dislocation of $x$ by 
\begin{eqnarray}\label{eq:disl}
 {42}/{w_{i-1}^2}\cdot 9w_{i-1}={378}/{w_{i-1}}\, .
\end{eqnarray}
If we stop our analysis after $c$ valid phases,
then by \eqref{eq:disl} and Lemma \ref{lem-offset}, the expected dislocation of any element $x$ is at most
\begin{equation}\label{eq:wcup}
\sum_{i=0}^{c-1}{378}/{w_i} + 9w_c \leq {378}/{w_c} + 9w_c \, .
\end{equation}
The inequality holds since $\frac{w_{i-1}}{w_{i}}\leq 2$ for $1<i<c$.
We next define $c$ such that phase $c$ is valid and $w_c$ only depends on $f(p)$.
The term $\frac{w_{i-1}}{ f(p)\log w_{i-1}} \geq 2$ holds for every valid phase $i$ and decreases with increasing $i$.
For instance for $w=6f(p)\log f(p)$:
\begin{equation*}
\frac{w}{f(p)\log w} = \frac{6\,f(p)\log f(p)}{f(p)\log (6\,f(p)\log f(p))}
\geq \frac{6\,\log f(p)}{3\,\log f(p)}\geq 2\, .
\end{equation*}
Therefore, if we choose $c$ such that $w_{c-1}\geq 6\,f(p)\log f(p)> w_c$, we can use that $f(p)\geq 6$ and upper bound $w_c$
by
\begin{equation}\label{eq:wcdown}
w_c = f(p)\log w_{c-1} \geq f(p)\log (6\,f(p)\log f(p)) \geq 6 \log (36\log 6) \geq 39\ .
\end{equation}

\Cref{eq:wcup,eq:wcdown} and Lemma~\ref{lem-offset} imply the following:
\begin{lemma}
	The expected dislocation of each element $x$ after running \WindowSort\ is at most
	\begin{equation*}
	{378}/{w_c}+9w_c <  10+9w_c\leq 10 w_c \leq 60 \, f(p) \log f(p) \enspace.
	\end{equation*}
\end{lemma}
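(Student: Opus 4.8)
The plan is to prove the per-element bound by verifying the displayed chain of inequalities one link at a time, since all of the probabilistic work has already been done in \Cref{lem-weak,lem-weak-pr} and condensed into \eqref{eq:disl}, \eqref{eq:wcup} and \eqref{eq:wcdown}. First I would re-establish the leftmost quantity $378/w_c + 9w_c$ as a genuine upper bound on the expected dislocation of a fixed element $x$, via a first-failure decomposition over the valid phases. If \weakINV{} first fails for $x$ during phase $i$, then $x$ was still within $w_{i-1}$ of its true rank at the start of that phase (condition (a) of \weakINV), so by \Cref{lem-offset} it finally lands at distance at most $9 w_{i-1}$; multiplying by the failure probability $42/w_{i-1}^2$ from \Cref{lem-weak-pr} gives the contribution $378/w_{i-1}$ recorded in \eqref{eq:disl}. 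Summing over the valid phases and adding a $9w_c$ term for the event that \weakINV{} survives all of them yields $\sum_{i=0}^{c-1} 378/w_i + 9w_c$, which collapses to $378/w_c + 9w_c$ by \eqref{eq:wcup}.

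The middle and right inequalities are then pure arithmetic in $w_c$, each reduced to a threshold on $w_c$ supplied either by \eqref{eq:wcdown} or by the defining property of $c$. Concretely, $378/w_c + 9w_c < 10 + 9w_c$ is equivalent to $378/w_c < 10$, i.e.\ $w_c > 37.8$, which holds because \eqref{eq:wcdown} gives $w_c \geq 39$; the same estimate $w_c \geq 39 \geq 10$ turns $10 + 9w_c$ into $10 w_c$; and the closing step $10 w_c \leq 60\, f(p)\log f(p)$ is exactly $w_c \leq 6\, f(p)\log f(p)$, guaranteed by having chosen $c$ so that $w_{c-1} \geq 6\, f(p)\log f(p) > w_c$. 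Once the per-element expectation is bounded by $60\, f(p)\log f(p)$, \Cref{thm:TD} follows immediately by linearity of expectation, summing the identical bound over all $n$ elements.

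The part that actually requires care is not this arithmetic but the legitimacy of the phase accounting that feeds it. I would make sure that the index $c$ is well defined: that the recursion \eqref{eq:phase} really descends from $w_0 = n$ through a nonempty list of \emph{valid} phases to some $w_c$ sandwiched as $w_{c-1} \geq 6\, f(p)\log f(p) > w_c$. This rests on two monotonicity facts already recorded in the text, namely that $w/(f(p)\log w)$ stays at least $2$ all the way down to $w \approx 6\, f(p)\log f(p)$ (so every phase down to $c$ is valid and \Cref{lem-weak-pr} applies) and that $f(p)\geq 6$ (needed so the numeric estimates in \eqref{eq:wcdown} close). The collapse in \eqref{eq:wcup} likewise hinges on the validity condition $w_{i-1}\geq 2 w_i$, which makes $\sum_i 1/w_i$ a geometric series dominated by its last term; so I would take care to restrict the geometric sum to valid phases and to absorb all iterations performed after phase $c$ into the single $8w_c$ displacement bound of \Cref{lem-offset} rather than into that sum. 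This bookkeeping, rather than the final chain of inequalities, is where the argument must be watched most closely.
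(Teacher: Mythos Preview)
Your proposal is correct and follows exactly the route the paper takes: the paper itself gives no separate proof for this lemma but simply says that \eqref{eq:wcup}, \eqref{eq:wcdown} and Lemma~\ref{lem-offset} imply it, and your argument is precisely the unpacking of that implication --- the first-failure decomposition producing $\sum_{i=0}^{c-1}378/w_i + 9w_c$, the geometric collapse to $378/w_c + 9w_c$, and then the three numerical checks using $w_c\ge 39$ and $w_c < 6\,f(p)\log f(p)$. Your third paragraph on the well-definedness of $c$ and the validity of every phase up to $c$ even fills in bookkeeping that the paper leaves implicit.
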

 This immediately implies Theorem~\ref{thm:TD}.

\section{Extension}\label{sec-extension}

The reason why we require the error probability $p$ to be smaller than $1/32$
is to analyze the probability that at most $w/4$ errors occur in $8w$ comparisons, for $w\geq 1$. This bound on the number of errors appears since we halve the window size in every iteration.
If we let the window size shrink by another rate $1/2<\alpha<1$,
the limit of $p$ will also change:

First, the running time of the adapted \WindowSort\ will become $O(\frac{1}{1-\alpha}n^2)$.
Second, for any permutation $\sigma$ and window size $w$, in order to maintain condition~\INV\ for an element $x$,
its computed position should differ from $x$ by at most $\alpha w$,
and thus $computed\_rank(x)$ should differ from $x$ by at most $\alpha w/2$. 

Our new issue is thus the probability that at most $\alpha w/2$ errors occur in $8w$ comparisons:
Since the expected number of errors is $8wp$, we have 
\[\frac{\alpha w}{2} = \frac{\alpha}{16p}\cdot 8wp=(1+\frac{\alpha-16p}{16p}) \cdot 8wp\, ,\]
and by the reasoning of Lemma~\ref{lem-pr-fail},
we have $\frac{\alpha-16p}{16p}> 0$, thus $p<{\alpha}/{16}$.
(Note that $f(p)$  should change accordingly.)

Finally, the number of windows for the weak invariant should also change accordingly. Let $m$ be the number of windows that matter for the weak invariant ($m=12$ when $\alpha=1/2$).
According to the analysis in Section~\ref{sub-weak}, we have $m-6\geq \alpha m$, implying that $m\geq \frac{6}{1-\alpha}$.
Of course, the constant inside the linear expected total dislocation will also change accordingly. 

\begin{theorem}
For an error probability $p<{\alpha}/{16}$, where $1/2< \alpha <1$,
 modified \WindowSort\ on $n$ elements takes $O(\frac{1}{1-\alpha}n^2)$ time,  has maximum dislocation $9\, g(p,\alpha) \log n$ with probability $1-1/n$,
and  expected total dislocation  $n \cdot (9+\frac{2}{1-\alpha})\cdot 6 \, g(p,\alpha) \log g(p,\alpha)$,
where
\[ g(p,\alpha) =
	 \begin{cases}
          \frac{100p}{(\alpha-16p)^2} &  \quad  \text{for} \quad \alpha /32 < p < \alpha/16\ ,\\
   	 \frac{4}{(\ln (\alpha/16p))-(\alpha-16p)} &  \quad  \text{for} \quad \alpha /96 < p \leq \alpha /32\ ,\\
         6 & \quad  \text{for} \quad  p \leq \alpha /96\ .
	  \end{cases}
	\]
\end{theorem}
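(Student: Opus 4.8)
The plan is to re-run, \emph{mutatis mutandis}, the three analyses of \Cref{sec-algorithm,sec-max,sec-total} for the modified algorithm in which the window size is multiplied by $\alpha$ rather than $1/2$ after each iteration. I would split the statement into its three independent claims—running time, maximum dislocation, and expected total dislocation—and obtain each by substituting the geometric ratio $\alpha$ for $1/2$ and the quantity $\frac{\alpha-16p}{16p}$ for $\frac{1-32p}{32p}$ in the corresponding proof. The running time is immediate: an iteration with window size $w$ costs $O(nw)$ for the comparisons plus $O(n)$ for the Counting Sort, and since the window sizes form the geometric sequence $n/2,\,\alpha n/2,\,\alpha^2 n/2,\dots$, the total is $\sum_{i\ge 0}O\!\left(n\cdot\alpha^{i}n\right)=O\!\left(\frac{1}{1-\alpha}n^{2}\right)$.

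For the maximum dislocation I would first restate the analogue of \Cref{th:window-sort:error-analysis}: to shrink the guarantee of \INV\ from $w$ to $\alpha w$ in one iteration it suffices, by \Cref{lem-difference}, that the computed rank of every element differ from its true rank by at most $\alpha w/2$. Thus the only probabilistic quantity that changes is $\Prob(w)$, now the probability that more than $\alpha w/2$ errors occur among the $8w$ comparisons. Since $\E[X]=8wp$ and $\frac{\alpha w}{2}=\bigl(1+\frac{\alpha-16p}{16p}\bigr)\E[X]$, I would set $\delta=\frac{\alpha-16p}{16p}$ and apply \Cref{thm-chernoff} exactly as in \Cref{lem-pr-fail}. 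The positivity requirement $\delta>0$ is precisely $p<\alpha/16$, and the three Chernoff regimes $0<\delta<1$, $\delta\ge 1$, and $R\ge 6\mu$ translate into $\alpha/32<p<\alpha/16$, $\alpha/96<p\le\alpha/32$, and $p\le\alpha/96$, reproducing the three branches of $g(p,\alpha)$. A routine computation then shows $w\ge 2\,g(p,\alpha)\log n$ forces $\Prob(w)\le 1/n^{3}$, so the union bound over all $x$ and all window sizes gives the claimed high-probability guarantee; the constant in front of $g(p,\alpha)\log n$ is the maximum dislocation $w^\star$ at the stopping point plus the subsequent movement, which by the generalized \Cref{lem-offset} telescopes to $\frac{4w^\star}{1-\alpha}$ (equal to $8w^\star$, hence the factor $9$, when $\alpha=1/2$).

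For the expected total dislocation I would redo the weak-invariant machinery of \Cref{sub-weak} with a widened neighbourhood. Because an element can now move by up to $\frac{4}{1-\alpha}w$ over the remaining iterations, the set of ``windows that matter'' must be enlarged to $m=\bigl\lceil\frac{6}{1-\alpha}\bigr\rceil$: tracking the nested inclusions in the proof of \Cref{lem-weak}, one iteration consumes a buffer of $6w$, so maintaining \weakINV\ across a step requires $m-6\ge\alpha m$, i.e.\ $m\ge\frac{6}{1-\alpha}$ (recovering $m=12$ at $\alpha=1/2$). With this choice the deterministic maintenance lemma goes through verbatim, and the analogue of \Cref{lem-weak-pr} holds with the constant $42$ replaced by one proportional to $m$ and to $\frac{1}{1-\alpha}$ (the per-phase geometric sum $\sum_i\alpha^{i}$ now contributes $\frac{1}{1-\alpha}$). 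Keeping the phase recursion \eqref{eq:phase}, the per-phase contribution \eqref{eq:disl}, and the telescoping bound \eqref{eq:wcup}, and choosing $c$ exactly as in \eqref{eq:wcdown} so that $w_c=\Theta\bigl(g(p,\alpha)\log g(p,\alpha)\bigr)$, I would collect the invariant-failure term $\sum_i \Theta(m)/w_i$ and the movement term $\frac{4}{1-\alpha}w_c$ into the stated coefficient $\bigl(9+\frac{2}{1-\alpha}\bigr)\!\cdot 6$, then multiply by $n$.

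I expect the main obstacle to be the re-derivation of \Cref{lem-weak} with the general parameter $m$: one must check that every inclusion in that proof—$Z\subseteq X$, $T\subseteq Z$, and the three sub-cases bounding where an element of $T$ may have originated—still closes when the window shrinks by $\alpha$, and that the $6w$ buffer is genuinely the worst case across all of them, so that the single inequality $m-6\ge\alpha m$ is sufficient. Propagating $m$ correctly through the per-phase failure probability and into the final coefficient is the one genuinely fiddly part; everything else is a transparent substitution of $\alpha$ for $1/2$ and of $\delta=\frac{\alpha-16p}{16p}$ into the Chernoff estimates.
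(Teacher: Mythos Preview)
Your proposal is correct and follows exactly the route the paper itself sketches in \Cref{sec-extension}: the paper does not give a separate formal proof but only records the three substitutions you identify---the geometric ratio $\alpha$ in the running-time sum, the Chernoff parameter $\delta=\frac{\alpha-16p}{16p}$ (whence the three regimes of $g(p,\alpha)$ and the constraint $p<\alpha/16$), and the enlarged neighbourhood $m\ge\frac{6}{1-\alpha}$ for the weak invariant. Your observation that the generalized \Cref{lem-offset} yields a tail movement of $\frac{4w^\star}{1-\alpha}$ rather than $8w^\star$ is in fact sharper than the paper's stated constant $9$, which only matches at $\alpha=1/2$; this is a minor imprecision in the theorem statement rather than a flaw in your plan.
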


\section{A lower bound on the maximum dislocation}\label{sec:lowerbound}

In this section we prove a lower bound on both the maximum and the average dislocation that can be achieved w.h.p. by any sorting algorithm. 

Let $S = \{1, 2, \dots, n\}$ be the set of elements to be sorted.
We can think of an instance of our sorting problem as a pair $\langle \pi, C \rangle$ where $\pi$ is a permutation of $S$ that represents the order of the element in the input collection and $C = (c_{i,j} )_{i,j}$ is a $n \times n$ matrix that encodes the result of the comparisons as seen by the algorithm. More precisely, $c_{i,j}$ is \text{``$<$''} if $i$ is reported to be smaller than $j$ when comparing $i$ and $j$ and $c_{i,j}=\text{``$<$''}$ otherwise.
Notice that $c_{i,j}=\text{``$<$''}$ iff $c_{j,i}=\text{``$>$''}$ and hence in what follows we will only define $c_{i,j}$ for $i < j$.

The following lemma -- whose proof is moved to \Cref{app:proofs} -- is a key ingredient in our lower bounds:
\begin{lemma}
\label{lemma:swap_probability}
Let $x,y \in S$ with $x < y$. Let $A$ be any (possibly randomized) algorithm. On a random instance, the probability that $A$ returns a permutation in which
elements $x$ and $y$ appear the wrong relative order is at least $\frac{1}{2} \left( \frac{p}{1-p} \right)^{2(y-x)-1}$. 
\end{lemma}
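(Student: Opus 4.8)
The plan is to reduce the lemma to a two--hypothesis indistinguishability argument based on a change of measure. Since the comparison errors are recurrent, the outcome of each comparison is fixed once and for all, so the decision $A$ makes about the relative order of $x$ and $y$ is a (possibly randomised) function of the comparison matrix $C$ only. I would contrast the true situation, in which the hidden order is the identity and in particular $x<y$, with the alternative \emph{swapped} situation in which the ranks of $x$ and $y$ are interchanged, so that $y<x$ becomes correct. Both are a priori equally plausible to $A$, since the error model treats all pairs independently and symmetrically and the elements are presented without their true ranks.

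The combinatorial core is to identify exactly which comparisons have a different correct answer in the two situations. A direct check shows these are precisely the pair $\{x,y\}$ together with the pairs $\{x,z\}$ and $\{y,z\}$ for each of the $y-x-1$ intermediate elements $z$ with $x<z<y$, while every other pair keeps the same correct answer; this yields $1+2(y-x-1)=2(y-x)-1$ \emph{relevant} comparisons, matching the exponent in the statement. In the likelihood ratio between the two situations the factors of all irrelevant pairs cancel, and each relevant pair contributes a factor of either $\frac{p}{1-p}$ or $\frac{1-p}{p}$, depending on whether the observed outcome agrees with the true order or with the swapped one. As $p<\tfrac12$, the smaller factor is $\frac{p}{1-p}<1$, so in the worst case (all relevant comparisons pointing against the given order) the ratio is at least $\lambda:=\left(\frac{p}{1-p}\right)^{2(y-x)-1}$; running the same computation in the other direction, the reciprocal ratio is also at least $\lambda$, so $\Pr[\,\cdot\mid\text{true}]\ge\lambda\,\Pr[\,\cdot\mid\text{swapped}]$ holds pointwise for every observation, and symmetrically.

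With this uniform bound the conclusion follows quickly. Let $\mathcal{E}$ be the observation--measurable event that $A$ outputs $y$ before $x$; this is the error event in the true situation and the success event in the swapped one. The change--of--measure bound gives $\Pr[\mathcal{E}\mid\text{true}]\ge\lambda\,\Pr[\mathcal{E}\mid\text{swapped}]=\lambda\left(1-q_{\mathrm{sw}}\right)$, where $q_{\mathrm{sw}}$ denotes the error probability in the swapped situation. Because the two situations are relabelings of one another and $A$ does not observe the true ranks, the two error probabilities coincide; writing $q$ for this common value gives $q\ge\lambda(1-q)$, hence $q\ge\frac{\lambda}{1+\lambda}\ge\frac{\lambda}{2}$ using $\lambda\le1$, which is exactly the claimed $\frac12\left(\frac{p}{1-p}\right)^{2(y-x)-1}$. (Alternatively, if one prefers to avoid asserting equality of the two error probabilities, one can let the random instance place the truth symmetrically in the two situations and add the two inequalities $q_{\mathrm{tr}}\ge\lambda(1-q_{\mathrm{sw}})$ and $q_{\mathrm{sw}}\ge\lambda(1-q_{\mathrm{tr}})$, which again yields an average error of at least $\frac{\lambda}{1+\lambda}\ge\frac{\lambda}{2}$.)

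The step I expect to be the main obstacle is precisely this symmetry, namely making rigorous that $A$ cannot use hidden information about the ranks to favour the correct orientation of $x$ and $y$. This requires being careful about what constitutes $A$'s observation (the anonymous elements and the recurrent comparison outcomes) and arguing that interchanging the labels $x$ and $y$ is a bijection on observations that merely flips the correct answer on the $2(y-x)-1$ relevant pairs. Once that is set up correctly, the likelihood bound only needs the single worst-case configuration of those comparisons, and the remaining algebra is routine.
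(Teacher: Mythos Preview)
Your approach is essentially the paper's own: a change-of-measure argument via the swap bijection that exchanges $x$ and $y$ together with all their comparison outcomes, yielding a pointwise likelihood ratio bounded below by $\lambda=\bigl(\tfrac{p}{1-p}\bigr)^{2(y-x)-1}$, followed by the inequality $q\ge\lambda(1-q)$ and hence $q\ge\lambda/2$. The paper carries this out by explicitly constructing the twin instance $I_{\pi,C}=\langle\pi',C'\rangle$ (swapping both the input positions and the comparison entries of $x$ and $y$), verifying $\Pr(I_{\pi,C})\ge\lambda\Pr(I)$, and observing that $A$ is correct on $I$ iff it is wrong on $I_{\pi,C}$; it then splits into the cases $\Pr[\text{correct}]\ge\tfrac12$ and its complement, which is arithmetically equivalent to your $q\ge\lambda/(1+\lambda)$.

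The one place where your write-up is looser than the paper is exactly the step you flag: your two-hypothesis framing with ``$q_{\mathrm{tr}}=q_{\mathrm{sw}}$ by symmetry'' tacitly assumes the algorithm sees only anonymous tokens and the matrix $C$, whereas in the paper's model the input also carries the random permutation $\pi$. The paper resolves this by making the bijection act on the full instance $\langle\pi,C\rangle$ (not just on $C$), so that the swapped instance is literally indistinguishable to $A_\lambda$ and the uniform prior on $\pi$ is preserved; your ``alternative'' paragraph, where you pair each instance with its swap and add the two inequalities, is precisely this argument. Once that is made explicit, your proof and the paper's coincide.
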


As a first consequence of the previous lemma, we obtain the following:
\begin{theorem}
	\label{thm:lb_max_dislocation}
	No (possibly randomized) algorithm can achieve maximum dislocation $o(\log n)$  with high probability.
\end{theorem}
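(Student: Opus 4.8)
The plan is to combine the single-pair guarantee of Lemma~\ref{lemma:swap_probability} with an elementary geometric observation and an amplification over many disjoint pairs. Throughout write $\rho = \frac{p}{1-p} < 1$ (recall $p<1/2$), and for a pair $x<y$ at distance $d=y-x$ abbreviate the bound of the lemma as $q_d = \frac12 \rho^{2d-1}$.

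\emph{Geometric step.} First I would show that whenever two elements $x<y$ at distance $d=y-x$ are returned in the wrong relative order, the output has maximum dislocation at least $d/2$. Letting $\tau$ denote the output permutation (so element $e$ lands at position $\tau(e)$ with dislocation $|\tau(e)-e|$), wrong order means $\tau(y)<\tau(x)$, and by the triangle inequality
\[
|\tau(x)-x| + |\tau(y)-y| \ge \big|(\tau(x)-\tau(y)) + (y-x)\big| = (\tau(x)-\tau(y)) + d \ge d ,
\]
since $\tau(x)-\tau(y)\ge 1>0$. Hence at least one of $x,y$ has dislocation at least $d/2$.

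\emph{Amplification step.} Next I would fix an integer $d=\Theta(\log n)$ and place $m=\Theta(n/d)$ \emph{disjoint} pairs $P_k=(x_k,x_k+d)$ whose supporting intervals $[x_k,x_k+d]$ are pairwise disjoint. Let $E_k$ be the event that $P_k$ is returned in the wrong order and $Z=\sum_k \mathbf{1}_{E_k}$. Lemma~\ref{lemma:swap_probability} gives $\E[Z]\ge m\,q_d=\Theta\big(\tfrac{n}{d}\,\rho^{2d}\big)$. Choosing $d=\lfloor \epsilon \log n\rfloor$ yields $\rho^{2d}=n^{-\gamma}$ with $\gamma$ proportional to $\epsilon\ln(1/\rho)$; taking $\epsilon$ small enough (depending only on $p$) makes $\gamma<1$, so $\E[Z]=\Theta(n^{1-\gamma}/\log n)\to\infty$. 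If the $E_k$ are independent, then $\Pr[Z=0]\le e^{-\E[Z]}\to 0$, so with probability $1-o(1)$ at least one of the disjoint pairs is swapped, and by the geometric step the maximum dislocation $D_{\max}$ is at least $d/2=\Omega(\log n)$. This contradicts the existence of an algorithm achieving $o(\log n)$ maximum dislocation w.h.p.: if some $g(n)=o(\log n)$ satisfied $\Pr[D_{\max}\le g(n)]\to 1$, then for large $n$ we would have $g(n)<d/2$ and hence $\Pr[D_{\max}\ge d/2]\to 0$, a contradiction.

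\emph{Main obstacle.} The delicate point is the independence (or at least the second-moment control) of the swap events $E_k$. Lemma~\ref{lemma:swap_probability} is stated only for the marginal of a single pair, whereas the returned order of $P_k$ a priori depends on the whole comparison matrix $C$, not just on comparisons inside $[x_k,x_k+d]$. I expect this to be the crux and would address it in one of two ways: (i) verify that the bound of Lemma~\ref{lemma:swap_probability} in fact holds \emph{conditioned} on all comparisons outside the block $[x_k,x_k+d]$ --- since its underlying change-of-measure argument only alters comparisons within the block, disjoint blocks then involve disjoint comparison variables and the $E_k$ become genuinely independent; or, failing that, (ii) replace independence by a Paley--Zygmund argument, using $\Pr[Z\ge 1]\ge \E[Z]^2/\E[Z^2]$ and showing that for disjoint blocks the covariances $\Pr[E_j\cap E_k]-\Pr[E_j]\Pr[E_k]$ are negligible. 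Either route upgrades $\E[Z]\to\infty$ into $\Pr[Z\ge 1]\to 1$ and completes the proof.
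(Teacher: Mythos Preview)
Your geometric step is exactly what the paper uses. The divergence is in the second part: the paper does \emph{not} amplify over many pairs. It applies Lemma~\ref{lemma:swap_probability} to the \emph{single} pair $(1,h)$ with $h=\big\lfloor \log n / (2\log\tfrac{1-p}{p})\big\rfloor$; for this choice the swap probability already exceeds $1/n$, and by your triangle-inequality step this yields $\Pr[D_{\max}\ge h/2-1]>1/n$. Since in this paper ``with high probability'' means $\ge 1-1/n$ (cf.\ Theorem~\ref{thm-maximum-dislocation}), this immediately rules out any algorithm with $\Pr[D_{\max}=o(\log n)]\ge 1-1/n$. No independence, no second moment, no disjoint blocks.

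Your amplification aims at the stronger conclusion $\Pr[D_{\max}=\Omega(\log n)]=1-o(1)$, which is not what the theorem asserts, and it carries a real gap that you correctly flag. Specifically, your justification for route~(i) is not accurate as stated: the change-of-measure in the proof of Lemma~\ref{lemma:swap_probability} does \emph{not} ``only alter comparisons within the block''. It relabels every comparison involving $x_k$ or $x_k+d$ with \emph{all} other elements (e.g.\ $c'_{i,x}=c_{i,y}$ for $i<x$); what is true is that the \emph{likelihood ratio} equals $1$ outside the $2d-1$ internal comparisons. Turning this into a clean conditional statement (so that the $E_k$ for disjoint blocks become independent) requires reproving a conditional version of the lemma, which you have not done. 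Route~(ii) likewise needs a bound on $\Pr[E_j\cap E_k]$ that is nowhere available. So as written the proposal is incomplete --- but, more to the point, the whole amplification layer is unnecessary: pick one pair at distance $\Theta(\log n)$ calibrated so that $q_d>1/n$, and you are done.
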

\begin{proof}
	By Lemma~\ref{lemma:swap_probability}, any algorithm, when invoked on a random instance, must return a permutation $\rho$ in which elements $1$ and $h= \big\lfloor  \frac{\log n }{2 \log {1-p}/{p}} \big\rfloor$ appear in the wrong order with a probability larger than $\frac{1}{n}$.
When this happens, at least one of the following two conditions holds: (i) the position of element $1$ in $\rho $ is at least $\lceil \frac{h}{2} \rceil$; or (ii) the position of element $h$ in $\rho$ is at most $\lfloor \frac{h}{2} \rfloor$.
In any case, the maximum dislocation must be at least $\frac{h}{2} - 1 = \Omega(\log n)$.
\end{proof}

Finally, we are also able to prove a lower bound to the total dislocation (whose proof is given in \Cref{app:proofs}).
\begin{theorem}
	\label{thm:lb_tot_dislocation}
	No (possibly randomized) algorithm can achieve expected total dislocation $o(n)$.
\end{theorem}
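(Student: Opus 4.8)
The plan is to reduce the total dislocation to a count of \emph{value-adjacent} inversions and to lower bound the expected number of these by a direct application of \Cref{lemma:swap_probability}, with no concentration or union-bound arguments required. Concretely, for a permutation $\rho$ returned by the algorithm on a random instance, let $Z_x$ be the indicator of the event that the two consecutive elements $x$ and $x+1$ appear in the wrong relative order in $\rho$, for $x=1,\dots,n-1$. Applying \Cref{lemma:swap_probability} with $y=x+1$ gives
\begin{equation*}
\Pr[Z_x=1]\;\ge\;\tfrac12\Big(\tfrac{p}{1-p}\Big)^{2(y-x)-1}\;=\;\frac{p}{2(1-p)},
\end{equation*}
which is a positive constant independent of $n$ (recall $p<1/2$).

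The key structural step is to show that, on \emph{every} outcome, the total dislocation of $\rho$ is at least $\sum_{x=1}^{n-1} Z_x$. This is the elementary direction of the Diaconis--Graham inequality (the total dislocation is never smaller than the number of inversions~\cite{diaconis}): since the value-adjacent inverted pairs counted by $\sum_x Z_x$ form a subset of all inverted pairs of $\rho$, the bound follows. Alternatively, I would prove it self-contained by a cut argument. For each threshold $k\in\{1,\dots,n-1\}$, let $B_k$ be the number of elements of value at most $k$ that are placed among the last $n-k$ positions of $\rho$. A double counting over thresholds shows that $\sum_{k} B_k$ equals exactly half the total dislocation, while whenever $Z_k=1$ the block $\{1,\dots,k\}$ cannot occupy precisely the first $k$ positions, forcing $B_k\ge 1\ge Z_k$. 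Either route yields, pointwise, that the total dislocation is at least $\sum_{x=1}^{n-1} Z_x$.

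Combining the two steps by linearity of expectation gives
\begin{equation*}
\E[\text{total dislocation}]\;\ge\;\E\Big[\textstyle\sum_{x=1}^{n-1} Z_x\Big]\;=\;\sum_{x=1}^{n-1}\Pr[Z_x=1]\;\ge\;(n-1)\cdot\frac{p}{2(1-p)}\;=\;\Omega(n),
\end{equation*}
where the hidden constant depends only on the fixed error probability $p$. Since this lower bound applies to every (possibly randomized) algorithm, no algorithm can achieve expected total dislocation $o(n)$, which is the claim.

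I expect the main obstacle to be the pointwise inequality relating total dislocation to the number of (value-adjacent) inversions; this is the one place where a genuine combinatorial argument is needed. Once it is established, the probabilistic content is immediate: the per-pair bound of \Cref{lemma:swap_probability} for neighbouring ranks is a constant, and summing the $n-1$ such pairs through linearity of expectation already forces the $\Omega(n)$ bound. In particular, unlike the upper-bound analyses earlier in the paper, this argument needs neither the Chernoff bounds nor the invariant machinery.
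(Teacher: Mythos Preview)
Your proposal is correct and follows the same high-level strategy as the paper: apply \Cref{lemma:swap_probability} to rank-adjacent pairs and sum via linearity of expectation. The only real difference is in which pairs are summed. The paper uses the $n/2$ \emph{disjoint} pairs $(1,2),(3,4),(5,6),\dots$, so that whenever $X_k=1$ at least one of the two (distinct) elements $2k+1,2k+2$ has dislocation at least $1$; since the pairs do not share elements, the pointwise inequality $\Delta\ge\sum_k X_k$ is immediate and no Diaconis--Graham or cut argument is needed, yielding $\E[\Delta]\ge\frac{n}{4}\cdot\frac{p}{1-p}$. Your version instead sums over all $n-1$ overlapping pairs $(x,x+1)$, which forces you to supply the extra combinatorial step you correctly identified (and your cut argument in fact gives $\Delta\ge 2\sum_x Z_x$, not just $\ge\sum_x Z_x$), in exchange for a constant-factor improvement in the bound. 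Both routes are valid; the paper simply trades a weaker constant for a one-line proof.
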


\bibliographystyle{abbrv}
\bibliography{references}

\begin{thebibliography}{10}

\bibitem{alonso}
L.~Alonso, P.~Chassaing, F.~Gillet, S.~Janson, E.~M. Reingold, and R.~Schott.
\newblock Quicksort with unreliable comparisons: a probabilistic analysis.
\newblock {\em Combinatorics, Probability and Computing}, 13(4-5):419--449,
  2004.

\bibitem{Braverman2008}
M.~Braverman and E.~Mossel.
\newblock {Noisy Sorting Without Resampling}.
\newblock In {\em Proceedings of the 19th Annual Symposium on Discrete
  Algorithms}, pages 268--276, 2008.

\bibitem{Cicalese13}
F.~Cicalese.
\newblock {\em Fault-Tolerant Search Algorithms - Reliable Computation with
  Unreliable Information}.
\newblock Monographs in Theoretical Computer Science. An {EATCS} Series.
  Springer, 2013.

\bibitem{Coppersmith}
D.~Coppersmith, L.~K. Fleischer, and A.~Rurda.
\newblock Ordering by weighted number of wins gives a good ranking for weighted
  tournaments.
\newblock {\em ACM Trans. Algorithms}, 6(3):55:1--55:13, July 2010.

\bibitem{Cormen}
T.~H. Cormen, C.~E. Leiserson, R.~L. Rivest, and C.~Stein.
\newblock {\em Introduction to Algorithms {(3.} ed.)}.
\newblock {MIT} Press, 2009.

\bibitem{Feige1994}
U.~Feige, P.~Raghavan, D.~Peleg, and E.~Upfal.
\newblock {Computing with Noisy Information}.
\newblock {\em SIAM Journal on Computing}, 23(5):1001--1018, 1994.

\bibitem{gecco17}
T.~Gaven{\v{c}}iak, B.~Geissmann, and J.~Lengler.
\newblock Sorting by swaps with noisy comparisons.
\newblock In {\em To appear in: Proceedings of the Genetic and Evolutionary
  Computation Conference 2016}, GECCO '17, 2017.

\bibitem{barbarapaolo}
B.~{Geissmann} and P.~{Penna}.
\newblock {Sort well with energy-constrained comparisons}.
\newblock {\em ArXiv e-prints}, Oct. 2016.

\bibitem{hadji}
P.~Hadjicostas and K.~Lakshmanan.
\newblock Recursive merge sort with erroneous comparisons.
\newblock {\em Discrete Applied Mathematics}, 159(14):1398--1417, 2011.

\bibitem{KarpK07}
R.~M. Karp and R.~Kleinberg.
\newblock Noisy binary search and its applications.
\newblock In {\em Proceedings of the Eighteenth Annual {ACM-SIAM} Symposium on
  Discrete Algorithms, {SODA} 2007, New Orleans, Louisiana, USA, January 7-9,
  2007}, pages 881--890, 2007.

\bibitem{Klein2011}
R.~Klein, R.~Penninger, C.~Sohler, and D.~P. Woodruff.
\newblock {Tolerant Algorithms}.
\newblock In {\em Proceedings of the19th Annual European Symposium on
  Algorithm}, pages 736----747, 2011.

\bibitem{MitzenmacherU05}
M.~Mitzenmacher and E.~Upfal.
\newblock {\em Probability and computing - randomized algorithms and
  probabilistic analysis}.
\newblock Cambridge University Press, 2005.

\bibitem{Pelc02}
A.~Pelc.
\newblock Searching games with errors - fifty years of coping with liars.
\newblock {\em Theor. Comput. Sci.}, 270(1-2):71--109, 2002.

\bibitem{diaconis}
R.~L.~G. Persi~Diaconis.
\newblock Spearman's footrule as a measure of disarray.
\newblock {\em Journal of the Royal Statistical Society. Series B
  (Methodological)}, 39(2):262--268, 1977.

\end{thebibliography}

\appendix
\section{Experimental Results}\label{sec-exp}

In this section, we discuss some experimental results on the performance of \WindowSort for increasing error probability $p$ (see Tables~\ref{tb-average} and \ref{tb-maximum}).
Our results suggest that in practice, the probability of error can be much higher than in our theoretical analysis (i.e. $p<1/32$). In these experiments we measure the average dislocation (which gives us an estimate of the expected total dislocation) and the maximum dislocation among all elements.

Note that in all the experiments, 
\WindowSort performs significantly better than the theoretical guarantees (see Theorem~\ref{thm-maximum-dislocation} and \ref{thm:TD} for the corresponding values of $p$).
Also, the experiments suggest that the expected total dislocation  is $O(n)$ for $p<1/5$, since the average dislocation seems to not increase with $n$ in these cases. 
Analogously, the maximum dislocation seems to be $O(\log n)$ for $p<1/4$.

We consider five different values for the input-size $n$ and ten different values for the error probability $p$. 
Each setting consists of one-hundred instances. We use the error probability to generate a comparison table among the $n$ elements for simulating recurrent comparison errors.
Moreover, we set $\alpha$ to be $1/2$ as in the standard version of \WindowSort.

\begin{table}[h]
	\begin{tabular}{|r|r|r|r|r|r|r|r|r|r|r|}
		\hline
		\backslashbox{$n$}{$p$}  & 1/3  & 1/4  & 1/5  & 1/8  & 1/12  & 1/16  & 1/20  & 1/24  & 1/28  & 1/32 \\
		\hline
		1024  & 14.160  & 4.873  & 2.870  & 1.377  & 0.881  & 0.670  & 0.536  & 0.454  & 0.390  & 0.346 \\
		\hline
		2048  & 15.993  & 4.984  & 2.884  & 1.397  & 0.895  & 0.674  & 0.541  & 0.464  & 0.394  & 0.348 \\
		\hline
		4096  & 17.494  & 5.075  & 2.904  & 1.390  & 0.893  & 0.673  & 0.545  & 0.460  & 0.398  & 0.351 \\
		\hline
		8192  & 19.030  & 5.105  & 2.898  & 1.395  & 0.894  & 0.675  & 0.545  & 0.460  & 0.397  & 0.351 \\
		\hline
		16384  & 20.377  & 5.123  & 2.902  & 1.390  & 0.892  & 0.673  & 0.545  & 0.460  & 0.398  & 0.349 \\
		\hline
	\end{tabular}
	\caption{The average dislocation of one element.}\label{tb-average}
\end{table}

\begin{table}[h]
	\begin{tabular}{|r|r|r|r|r|r|r|r|r|r|r|}
		\hline
		\backslashbox{$n$}{$p$}  & 1/3  & 1/4  & 1/5  & 1/8  & 1/12  & 1/16  & 1/20  & 1/24  & 1/28  & 1/32 \\
		\hline
		1024  & 15.600  & 5.400  & 3.900  & 2.100  & 1.200  & 0.900  & 0.900  & 0.600  & 0.600  & 0.600 \\
		\hline
		2048  & 17.000  & 4.909  & 2.818  & 1.545  & 1.091  & 0.818  & 0.818  & 0.636  & 0.636  & 0.636 \\
		\hline
		4096  & 17.917  & 5.333  & 2.833  & 1.500  & 1.083  & 0.833  & 0.917  & 0.667  & 0.583  & 0.667 \\
		\hline
		8192  & 18.923  & 6.462  & 3.077  & 1.769  & 1.154  & 0.769  & 0.692  & 0.769  & 0.538  & 0.538 \\
		\hline
		16384  & 22.714  & 5.071  & 3.929  & 1.786  & 1.071  & 0.857  & 0.643  & 0.643  & 0.571  & 0.571 \\
		\hline
	\end{tabular}
	\caption{The maximum dislocation divided by $\log n$.}\label{tb-maximum}
\end{table}
\section{Omitted Proofs}\label{app:proofs}

\subsection*{Proof of Lemma~\ref{lemma:swap_probability}}
	We prove that, for any running time $t>0$, no algorithm $A$ can compute, within $t$ steps, a sequence in which $x$ and $y$ are in the correct relative order with a probability larger than $1-\frac{1}{2} \left( \frac{p}{1-p} \right)^{2(y-x)-1}$.

	First of all, let us focus a deterministic version of algorithm $A$ by fixing a sequence $\lambda \in \{0, 1\}^t$ of random bits that can be used be the algorithm. We call the resulting algorithm $A_\lambda$ and we let $p_\lambda$ denote the probability of generating the sequence $\lambda$ of random bits (i.e., $2^{-t}$ if the random bits come from a fair coin).\footnote{Notice that $A_\lambda$ can use less than $t$ random bits, but not more (due to the limit on its running time).}
	Notice that $A$ might already be a deterministic algorithm, in this case $A = A_{\lambda}$ for every $\lambda \in \{0, 1\}^t$.

	Consider an instance $I = \langle \pi, C \rangle$, and let $\phi_I$ be the permutation of the elements in $S$ returned by $A_\lambda(I)$, i.e., $\phi_I(i)=j$ if the element $i \in S$ is the $j$-th element of the sequence returned by $A_\lambda$.
  We define a new instance $I_{\pi,C} = \langle \pi', C' \rangle$ by ``swapping'' elements $x$ and $y$ of $I$ along with the results of all their comparisons. More formally we define: $\pi'(i)=\pi(i)$ for all $i \in S \setminus \{x, y\}$, $\pi'(x)=\pi(y)$, and $\pi'(y)=\pi(x)$; We define the comparison matrix $C' = (c'_{i,j})_{i,j}$ accordingly, i.e.:%
\indent\begin{multicols}{2}  
 \begin{itemize}
	\item $c'_{i,j} = c_{i,j}$ if $i,j \in S \setminus \{x, y\}$ and $j>i$,
  	\item $c'_{i,x} = c_{i,y}$ if $i < x$,
 	\item $c'_{x,j} = c_{y,j}$ if $j>x$ and $j \neq y$,
    \item $c'_{i,y} = c_{i,x}$ if $i < y$ and $i \neq x$,
    \item $c'_{y,j} = c_{x,j}$ if $j > y$,
	\item $c'_{x,y} = c_{y, x}$.
  \end{itemize}
\end{multicols}%
\noindent Letting $R = ( r_{i,j} )_{i,j}$ be a random variable over the space of all comparison matrices, we get:
\begin{multline*}
	\Pr(R = C') = \prod_{i<j} \Pr(r_{i,j} = c'_{i,j}) \\
     = \prod_{\substack{i<j \\ i,j \not\in \{x, y\} }} \Pr(r_{i,j} = c_{i,j}) \cdot 
    \prod_{i<x} \Pr(r_{i,x} = c_{i,y}) \cdot
	\prod_{x<j<y} \Pr(r_{x,j} = c_{y,j}) \cdot
    \prod_{j>y} \Pr(r_{x,j} = c_{y,j}) \cdot \\
     \qquad \prod_{i<x} \Pr(r_{i,y} = c_{i,x}) \cdot 
    \prod_{x<i<y} \Pr(r_{i,y} = c_{i,x}) \cdot
    \prod_{j>y} \Pr(r_{y,j} = c_{x,j}) \cdot
    \Pr(r_{x,y} = c_{y,x} ) \\
     = \prod_{\substack{i<j \\ i,j \not\in \{x, y\} }} \Pr(r_{i,j} = c_{i,j}) \cdot 
    \prod_{i<x} \Pr(r_{i,y} = c_{i,y}) \cdot
    \prod_{x<j<y} (1-\Pr(r_{x,j} = c_{j,y})) \cdot
    \prod_{j>y} \Pr(r_{y,j} = c_{y,j}) \cdot \\
     \qquad \prod_{i<x} \Pr(r_{i,x} = c_{i,x}) \cdot 
    \prod_{x<i<y} (1-\Pr(r_{i,y} = c_{x,i})) \cdot
    \prod_{j>y} \Pr(r_{x,j} = c_{x,j}) \cdot
	(1- \Pr(r_{x,y} = c_{x,y} )) \\
     =  \prod_{i<j} \Pr(r_{i,j} = c_{i,j}) \bigg/ \left( \prod_{x<j<y} \Pr(r_{x,j} = c_{x,j}) \cdot  \prod_{x<i<y} \Pr(r_{i,y} = c_{i,y}) \cdot \Pr(r_{x,y} = c_{x,y} ) \right) \cdot \\
     \quad \prod_{x<j<y} (1-\Pr(r_{x,j} = c_{j,y})) \cdot
    \prod_{x<i<y} (1-\Pr(r_{i,y} = c_{x,i})) \cdot
    (1- \Pr(r_{x,y} = c_{x,y} )) \\
     \ge P(R=C) \cdot \left(\frac{1}{1-p}\right)^{2(y-x)-1} \cdot p^{2(y-x)-1} 
    = \Pr(R=C) \cdot \left(\frac{p}{1-p}\right)^{2(y-x)-1}
\end{multline*}
\noindent where we used the fact that $r_{i,j}$ and $r_{i',j'}$ with $i < j$ and $i' < j'$ have the same probability distribution.

Let $\widetilde{\pi}$ be a random variable whose value is chosen u.a.r. over all the permutations of $S$.
Let $\Pr(I)$ be the probability that instance $I=\langle \pi, C \rangle$ is to be solved and notice that $\Pr(I) = \Pr(\widetilde{\pi} = \pi) \Pr(R = C)$.
It follows from the above discussion that $\Pr(I_{\pi,c}) \ge \Pr( I ) \cdot \left( \frac{p}{1-p} \right)^{2(y-x)-1}$.

Let $X_I$ be an indicator random variable that is $1$ iff algorithm $A_\lambda$ on instance $I$ either does not terminates within $t$ steps, or it terminates returning a sequence in which $x$ appears after $y$. Let $U$ be the set of all the possible instances and 
$U'$ be the set of the instances $I \in U$ such that $X_I = 0$ (i.e., $x$ and $y$ appear in the correct order in the output of $A$ on $I$).
Notice that there is a bijection between the instances in $I = \langle \pi, C \rangle$ and their corresponding instances $I_{\pi, C}$ (i.e., our transformation is injective). Moreover, since $I$ and $I_{\pi, C}$ are indistinguishable by $A_\lambda$, we have that
either (i) $A_\lambda$ does not terminate within $t$ steps on both instances, or (ii) $\sigma_I(x) < \sigma_I(y) \iff  \sigma_{I_{\pi, C}}(x) > \sigma_{I_{\pi, C}}(y)$. As a consequence, if $X_I = 0$ then  $X_{I_{\pi, C}} = 1$.
Now, either $\sum_{I \in U} X_I \Pr(I) \ge \frac{1}{2}$ or we must have $\sum_{I \in U'} \Pr(I) \ge \frac{1}{2}$, which implies:
\begin{align*}
	\sum_{I \in U} X_I \Pr(I)  &\ge \sum_{\langle \pi, C \rangle \in U} X_{I_{\pi, C}} \Pr(I_{\pi, C}) 
    \ge \sum_{\langle \pi, C \rangle \in U'} X_{I_{\pi, C}} \Pr(I_{\pi, C}) 
    = \sum_{\langle \pi, C \rangle \in U'} \Pr(I_{\pi, C}) \\
    & \ge \sum_{ I \in U'} \Pr( I ) \cdot \left( \frac{p}{1-p} \right)^{2(y-x)-1} 
    \ge \frac{1}{2} \left( \frac{p}{1-p} \right)^{2(y-x)-1}.
\end{align*}

We let $Y$ (resp. $Y_\lambda$) be an indicator random variable which is $1$ iff either the execution of $A$ (resp. $A_\lambda$) on a random instance does not terminate within $t$ steps, or it terminates returning a sequence in which $x$ and $y$ appear in the wrong relative order. 
By the above calculations we know that $\Pr(Y_\lambda = 1) = \sum_{I \in U} X_I \Pr(I) \ge \frac{1}{2} \left( \frac{p}{1-p} \right)^{2(y-x)-1} \; \forall \lambda \in \{0, 1\}^t$. We are now ready to bound the probability that $Y=1$, indeed:

\[
	\Pr(Y = 1) = \sum_{ \lambda \in  \{0, 1\}^t} \Pr(Y_\lambda=1) p_\lambda \ge  
	\frac{1}{2} \left( \frac{p}{1-p} \right)^{2(y-x)-1} \sum_{ \lambda \in \{0, 1\}^t } p_\lambda 
	= \frac{1}{2} \left( \frac{p}{1-p} \right)^{2(y-x)-1},
\]
\noindent where the last equality follows from the fact that $p_\lambda$ is a probability distribution over the elements $\lambda \in \{0, 1\}^t$, and hence  $\sum_{ \lambda \in \{0, 1\}^t } p_\lambda = 1$.
\qed

\subsection*{Proof of Theorem~\ref{thm:lb_tot_dislocation}}

	Let $A$ be any algorithm and let $\langle \pi,C \rangle$ be a random instance on an even number of elements $n$.
	We define $X_k$ for $0 \le k < n/2$ to be an indicator random variable that is $1$ iff $A(\langle \pi,C \rangle)$ 
	returns a permutation of the elements in $S$ in which elements $2k+1$ and $2k+2$ appear in the wrong order.
	
	By Lemma~\ref{lemma:swap_probability} we know that $\Pr(X_k=1) \ge \frac{1}{2}\frac{p}{1-p}$. We can hence obtain a lower bound on the expected total dislocation $\Delta$ achieved by $A$ as follows:
	\[
		\E[\Delta] \ge \sum_{k=0}^{n/2 - 1} \E[X_k] \ge \sum_{k=0}^{n/2 - 1} \frac{1}{2} \frac{p}{1-p} \ge \frac{n}{4} \frac{p}{1-p} = \Omega(n).
	\]
\qed

\end{document}